\begin{document}
%
\title{A Unified Learning-based Optimization Framework for 0-1 Mixed Problems in Wireless Networks}
%
%
%

\author{        Kairong Ma,~\IEEEmembership{Graduate Student~Member,~IEEE},
				Yao Sun,~\IEEEmembership{Senior~Member,~IEEE}, 
                Shuheng Hua,~\IEEEmembership{Graduate Student~Member,~IEEE}, 
                Muhammad Ali Imran,~\IEEEmembership{Fellow,~IEEE}, 
                and Walid Saad,~\IEEEmembership{Fellow,~IEEE}

                 \thanks{
	
	Kairong Ma, Yao Sun, Shuheng Hua and Muhammad Ali Imran are with the James Watt School of Engineering, University of Glasgow, Glasgow G12 8QQ, UK (e-mail: \{k.ma.2, s.hua.1\}@research.gla.ac.uk; \{Yao.Sun, Muhammad.Imran\}@glasgow.ac.uk.)
 
Walid Saad is with the Bradley Department of Electrical and Computer Engineering, Virginia Tech, Arlington, VA, 22203, USA, (e-mail: walids@vt.edu.

The research work of Kairong Ma, Yao Sun, Shuheng Hua, and Muhammad Ali Imran was supported by EPSRC projects, CHEDDAR EP/X040518/1 and CHEDDAR uplift EP/Y037421/1.

© 2025 IEEE. Personal use is permitted. For any other purposes, permission must be obtained from the IEEE. Accepted for publication in IEEE Transactions on Communications. DOI: 10.1109/TCOMM.2025.3618171. 
} }

\maketitle

\begin{abstract}
Several wireless networking problems are often posed as  0-1 mixed optimization problems, which involve binary variables (e.g., selection of access points, channels, and tasks) and continuous variables (e.g., allocation of bandwidth, power, and computing resources). 
Traditional optimization methods as well as reinforcement learning (RL) algorithms have been widely exploited to solve these problems under different network scenarios. 
However, solving such problems becomes more challenging when dealing with a large network scale, multi-dimensional radio resources, and diversified service requirements. 
To this end, in this paper, a unified framework that combines RL and optimization theory is proposed to solve 0-1 mixed optimization problems in wireless networks. 
First, RL is used to capture the process of solving binary variables as a sequential decision-making task. 
During the decision-making steps, the binary (0-1) variables are relaxed and, then, a relaxed problem is solved to obtain a relaxed solution, which serves as prior information to guide RL searching policy.
Then, at the end of decision-making process, the search policy is updated via suboptimal objective value based on decisions made. 
The performance bound and convergence guarantees of the proposed framework are then proven theoretically. 
An extension of this approach is provided to solve problems with a non-convex objective function and/or non-convex constraints. 
Numerical results show that the proposed approach reduces the convergence time by about 30\% over B\&B in small-scale problems with slightly higher objective values. 
In large-scale scenarios, it can improve the normalized objective values by 20\% over RL with a shorter convergence time.

\end{abstract}

\begin{IEEEkeywords}
0-1 mixed, optimization, machine learning, wireless networking.
\end{IEEEkeywords}

\IEEEpeerreviewmaketitle

\newtheorem{definition}{\textbf{Definition}}
\newtheorem{lemma}{\textbf{Lemma}}
\newtheorem{proposition}{\textbf{Proposition}}
\makeatletter
\renewenvironment{proof}[1][\proofname]{\par
  \pushQED{\qed}\normalfont\topsep6\p@\@plus6\p@\relax
  \trivlist\item[\hskip\labelsep
    \itshape
    #1\@addpunct{:}]\ignorespaces
}{%
  \popQED\endtrivlist\@endpefalse
}
\makeatother

%
%
%
%

\section{Introduction}
In wireless communication networks, the use of 0-1 mixed optimization problems is a common approach to address resource management challenges. 
This type of problems involves both binary variables and continuous variables, 
with an optimization objective defined based on a network performance metric and several constraints on resource budgets and service requirements. 
In wireless networks, binary variables are typically used to decide the selection of access points, channels, resource blocks, or tasks. 
Continuous variables are typically used to allocate the amount of resources, such as bandwidth or energy. 
Network operators aim to optimally allocate the limited wireless resources 
to achieve one or more network optimization goals, such as system throughput, spectrum efficiency, or network capacity.


\textcolor{black}{
However, from the mathematical perspective, solving 0-1 mixed optimization problems is challenging. 
 The challenges stem from the binary variables, the coupling of binary and continuous variables, and often, non-convex objective functions and/or constraints.}
A variety of optimization-based methods such as linear programming relaxation \cite{ref23}, branch and cut \cite{ref24}, and dynamic programming \cite{ref25} have been used to solve these problems.
 These methods can provide the exactly optimal solution for small-scale problems, 
but they cannot be directly applied to dealing with general large-scale problems due to non-convexity and the problem's combinatorial nature.
To overcome this challenge, wireless networks may rely on heuristic algorithms and machine learning (ML) methods to generate approximate solutions \cite{ref90}. 
\textcolor{black}{For example, some deep reinforcement learning (DRL) approaches now use large reasoning models to enhance performance \cite{DU1}. Others focus on developing novel, diffusion-based reward shaping schemes \cite{DU2}. In parallel, distributed frameworks like federated learning are also gaining traction. Researchers are designing long-term contribution incentives in these systems \cite{RR1}. They are also building robust learning frameworks that use verifiable perturbations \cite{RR2}. While these approaches provide reasonable solutions for large-scale problems, they cannot guarantee optimality or convergence \cite{ref100}.
}

\vspace{-5pt}
\subsection{{Related Works}}
Prior works \cite{optimization_based,heuristic_based,ref7,ref13,ref14,ref15,ref16,refHungarian,walid1,walid2,walid3,DRL,ref17,ref20,RL_02,genetic,heuristic_greedy,heuristic_01} on wireless resource management leverage optimization techniques that can be classified into three categories:
traditional optimization-based algorithms \cite{optimization_based}, heuristic algorithms \cite{heuristic_based}, and more recently ML-based algorithms \cite{ref7}.
In general, traditional optimization methods, such as branch and bound (B\&B), cutting planes, and dynamic programming, can guarantee an optimal solution. 
However, these methods are not suitable for large-scale problems due to their exponential complexity. 
The authors in \cite{ref13} integrate Balas's method with B\&B search strategies, including depth-first, breadth-first, and best-first search,
to reduce computational demands for optimizing the binary integer programming problems. 
However, memory requirement and computing complexity remain a challenge for large-scale problems because the B\&B methods need to generate a list of nodes for evaluation.
The authors in \cite{ref14} and \cite{ref15} use a linear relaxation on binary variables, and obtain fractional solutions by solving the relaxed problem.
A direct rounding method is used in \cite{ref14} to recover the binary variables. 
The authors in \cite{ref15} take the elements of the largest fractional solution of each row to 1 and the rest to 0, because there is a special constraint that the binary variables of each row sum to 1. 
The authors in \cite{ref16} use a penalty factor in a suboptimal algorithm to enforce binary constraints on subcarrier allocation in a 0-1 mixed programming framework. 
This penalty factor penalizes deviations from binary values (0 or 1), ensuring that allocation decisions converge to binary values during the optimization process.
In addition, the Hungarian method is also used in \cite{refHungarian} to address combinatorial assignment problems in a polynomial time. 
However, the Hungarian method cannot effectively handle 0-1 mixed problems due to the coupling of binary and continuous variables.

Meanwhile, the works in \cite{walid1,walid2,walid3,DRL,ref17,ref20,RL_02} adopt ML solutions for solving 0-1 mixed optimization problems in various wireless scenarios. 
In general, ML techniques are used in two ways to solve these problems: using neural networks to map inputs (problem parameters) and outputs (solutions), or employing reinforcement learning (RL) to explore solution space.
In \cite{walid1}, the authors employ an RL-based curriculum learning framework, which incrementally identifies hierarchical belief structures and refines semantic event descriptions to optimize task execution efficiency, communication costs, and belief utilization.
The authors in \cite{walid2} use an ML solution by introducing a federated echo state network to predict the locations and orientations of virtual reality (VR) users in a distributed manner, optimizing the user-base station association to minimize breaks in presence events in wireless networks.
\textcolor{black}{The work of \cite{walid3} uses a neural combinatorial deep reinforcement learning (NCRL) framework to optimize strategies in wireless networks. Their approach combines convex optimization with a state representation based on long short-term memory (LSTM). The goal is to jointly optimize the trajectory and scheduling for UAVs.}
The works in \cite{DRL} and \cite{ref17} introduce deep learning (DL)-based frameworks for optimization in wireless networks. 
In \cite{DRL}, binary offloading decisions are separated from resource allocation tasks, using a DNN-based module for the binary decisions and an optimization-based module for resource allocation. 
Similarly, the work in \cite{ref17} proposes to approximate solutions for non-convex constrained optimization problems, using binarization and a primal-dual training method.
The authors in \cite{ref20} integrate imitation learning with the B\&B algorithm to optimize pruning policies, achieving near-optimal performance.
In \cite{RL_02}, the authors employ RL in a game-theoretic framework to optimize energy efficiency through iterative updates of user association and resource scheduling strategies.
In summary, DNN-based methods like those in \cite{walid1,walid2,walid3,DRL,ref17,ref20,RL_02} are useful for generating fast approximate solutions with low computational intensity. 
However, the quality of the DNN output depends on the robustness of the dataset and the effectiveness of the training process. 
RL-based methods can search the solution space based on historical data, however, they can be inefficient for problems with large solution spaces, 
particularly in the context of large-scale  wireless network.

Furthermore, the works in \cite{genetic,heuristic_greedy,heuristic_01} use heuristic methods to obtain acceptable solutions with significantly reduced computational complexity, 
despite these methods may lack guarantees of convergence and optimality. 
The authors in \cite{genetic} conduct a survey of the applications of genetic algorithms (GAs) in wireless networks. This paper covers how GAs have been utilized to address problems characterized by large, complex search spaces, particularly the 0-1 optimization problems in wireless networks. 
The authors in \cite{heuristic_greedy} develop a heuristic algorithm that combines a greedy approach for constructing initial feasible solutions and a local search method for exploring neighboring solutions to optimize binary variables.
The authors in \cite{heuristic_01} present a fast heuristic algorithm for channel assignment in wireless networks, which utilizes weighted maximal independent sets within a conflict graph to manage cumulative interference, enhancing computational speed and optimality.
Overall, heuristic algorithms such as those in \cite{genetic,heuristic_greedy,heuristic_01} face a tradeoff between solution optimality and computational complexity, i.e., they target for fast problem-solving with compromises on optimality.
However, these algorithms are typically tailored to specific problem structures, and their performance is highly sensitive to parameters (such as termination criteria or neighborhood size), and it often requires extensive experiments to find the optimal configurations. 
Additionally, the lack of performance guarantees results in inconsistent outcomes across different instances of the same problem, making these approaches less reliable for applications with high precision requirements.

\vspace{-5pt}
\subsection{{Contributions}}
The main contribution of this paper is to overcome the limitations of prior works by developing a unified approach for solving 0-1 mixed optimization problems in wireless networks.
\textcolor{black}{
While pure RL methods explore the solution space based on trial-and-error, which is inefficient in the vast solution spaces of large-scale wireless networks. 
The core novelty of our work is the creation of a framework that transforms this process into an informed search. We achieve this by creating a unique synergy between convex optimization and RL, where the former provides a prior to guide the exploration of the latter.} 
We first model the process of solving binary variables as an Markov decision process (MDP). Then, we relax binary variables to reformulate the original problem into a convex framework, where a feasible relaxed solution is derived based on convex optimization theory. 
Using the relaxed solution as prior information, RL is exploited to determine the suboptimal solution of the binary variables. 
When the RL algorithm reaches a terminal state, a binary solution is determined, and the continuous variables can be easily optimized using traditional optimization method.
In each episode, the objective value serves as the reward for RL, i.e., it is used to update the search policy. 
This process is repeated until the termination condition for the whole problem is met.
This proposed approach can enhance the efficiency of searching binary variables space as well as the likelihood of converging to a suboptimal solution. 
In summary, our key contributions include: 
\begin{itemize}
    \item 
    Considering the problem scalability and time complexity requirement in wireless networks, 
    we integrate convex optimization with RL to tackle the binary decision of 0–1 mixed problems.
    We solve a relaxed sub-problem before binary decision making, and use the relaxed solution as prior information to guide RL search policy.
    \item 
    We theoretically prove that the neighbourhood of the relaxed solution has a higher probability of containing suboptimal solutions. Thus, relaxed solutions can provide prior information for RL to guide its searching policy effectively.
    
    \item We discuss the extensions of our approach to handle non-convex objective functions and non-convex constraints. 
    These extensions broaden the applicability of our approach to complex real-world wireless networking scenarios.
    \item We conduct simulations to validate the proposed learning-based optimization approach for wireless problems with convex and non-convex objective functions.
    The simulations demonstrate that for large-scale wireless problems, 
    our proposed approach is more effective in exploring better solutions than traditional optimization methods and pure RL.
\end{itemize}

\textcolor{black}{The rest of this paper is organized as follows.} 
The general problem is presented in Section \uppercase\expandafter{\romannumeral 2}. 
Section \uppercase\expandafter{\romannumeral 3} presents the proposed learning-based optimization approach that is used solve 0-1 mixed problem is proposed. 
Section \uppercase\expandafter{\romannumeral 4} discusses the extensions of proposed approach in non-convex problems. 
Section \uppercase\expandafter{\romannumeral 5} presents our numerical results and conclusions are drawn in Section \uppercase\expandafter{\romannumeral 6}.

\section{A General 0-1 Mixed Optimization Problem for Wireless Networking}
{In a typical 0-1 mixed optimization problem in wireless} networks, binary variables are used to capture discrete decision variables, such as user association, channel assignment, or task allocation; while continuous variables are introduced to manage wireless resources, such as power, bandwidth, or time. 
The objective is generally to optimize network performance or resource utilization, subject to constraints on the quality-of-service (QoS) requirements as well as resource limitations. 
In this section, we formulate a general 0-1 mixed optimization problem, and provide several common examples from wireless networks including bandwidth optimization, power allocation, and task offloading.

\vspace{-10pt}
\subsection{{Problem Description and Formulation}}


Consider a scenario with \(N\) users and \(M\) base stations (BSs) in a wireless network in which a general 0-1 mixed optimization problem for network resource management can be formulated as follows: 
\begin{align} \label{P1}
\quad\max_{\boldsymbol{x,y}} \quad &f\left(\boldsymbol{x,y}\right)\\
\text{s.t.}\quad  &g_l(\boldsymbol{x,y}) \leq 0 ,\quad  \forall{l\in \mathcal{L}} ;\tag{\ref{P1}{a}}\\
&\textstyle{\sum_{j=1}^{M}}x_{ij}=1 , \quad \forall{i\in \mathcal{N}};\tag{\ref{P1}{b}}\\
&x_{ij}\in \{0,1 \} ,\quad \forall{i\in \mathcal{N}} , \forall{j\in \mathcal{M}}\tag{\ref{P1}{c}}.
\end{align}

\textit{Binary Variables:}
\(\boldsymbol{x}\in \{0,1\}^{N\times M}\) is a matrix of binary decision variables, where each \(x_{ij}\in \{0, 1\}\) indicates a yes/no, on/off, or included/excluded decision.\footnotemark 
\footnotetext{Our approach can still work under multiple sets of binary variables \(\boldsymbol{x}\in \{0,1\}^{N\times M\times K}\) or continuous variables \(\boldsymbol{y}\in {{\mathbb R}_0^+}^{N\times M\times K}\). For example, if we consider a joint optimization of bandwidth and power allocation with continuous variables, our approach can still work.}
For example, \(x_{ij}\) might represent whether or not to establish a connection between user \(i\) and BS \(j\), 
where \(x_{ij}=1\) if user \(i\) is connected to BS \(j\), and \(x_{ij}=0\) otherwise. 

\textit{Continuous Variables:} 
\(\boldsymbol{y}\in {{\mathbb R}_0^+}^{N\times M}\) is a matrix that consists of continuous variables.
It could represent the amount of resources allocated to links or nodes, such as bandwidth or power. 
For example, \(y_{ij}\) can be the transmit power at BS \(j\) to user \(i\), 
or the bandwidth allocated from BS \(j\) to user \(i\).
These variables usually take on positive real values, and subject to resource budgets.

\textit{Objective Function:} 
The objective function \(f(\boldsymbol{x,y}): \left(\{0,1\}^{N\times M},{{\mathbb R}_0^+}^{N\times M} \right) \to \mathbb R\) is a wireless network performance metric that depends on both the binary variables \(\boldsymbol{x}\) and continuous variables \(\boldsymbol{y}\).
This function can capture the networks goal of maximizing the overall network throughput, minimizing the total power consumption, or optimizing any other relevant network performance metrics.
It should be noted that multi-objective optimization is out of the scope of our work. 

\textit{Constraints \(g_l(\boldsymbol{x,y}) \leq 0\):} 
A set of constraints is imposed by the network in order to ensure that the minimum QoS requirements are met within the given resource budget. 
For example, the sum of bandwidth allocated by BS \(j\) to its associated users should not exceed its capacity, i.e., \(\sum_i^N x_{ij}y_{ij}\leq B_j\), if we use \(x_{ij}\) and \(y_{ij}\) to represent cell association and bandwidth allocation, respectively. 
Meanwhile, each user might have its own requirements on service quality, in terms of throughput, latency, data rate, etc.

\textit{Exclusivity Constraint \(\sum_{j=1}^{M}x_{ij}=1\):} 
The exclusivity constraint is commonly needed in wireless networking optimization problems. 
It ensures that exactly one of the options or resources is selected from a set of possible choices for each instance \(i\). 
This constraint is required typically for BS association, caching, channel allocation, time slot allocation, etc. 
For example, for user association, this constraint ensures that each user is uniquely associated to exactly one BS to access networks. 

Problem (\ref{P1}) can be viewed as a general resource management problem for a wireless network with \(M\) BSs and \(N\) users. This problem is clearly formulated as a 0-1 mixed optimization problem. The goal is to optimize the objective function \(f(\boldsymbol{x,y})\) subject to \(L\) constrains \(g_l(\boldsymbol{x,y})\leq 0\) and an exclusivity constraint \(\sum_{j=1}^M x_{ij}=1\). 
\textcolor{black}{
Note that, this framework can be extended to problems with higher-dimensional variables, such as \(x\in \{0,1\}^{N\times M\times K}\). This generalization is possible because the problem can still be framed as a sequence of \(N\) decisions.
The fundamental principle of using a relaxed solution to inform the RL policy is independent of the variable dimension.
}

\vspace{-6pt}
\subsection{Examples of 0-1 Mixed Problems}
Problem (1) can model many wireless network resource management problems. Here, we provide three concrete examples in order to shed light on the applicability of this problem. 

\textbf{Bandwidth Allocation:}
Problem (1) can be naturally used to capture bandwidth allocation problems in which the goal is to determine the amount of bandwidth that each BS must allocate to its users, along with determining the association between users and BSs. 
Here, the variables \(x_{ij}\) and \(y_{ij}\) in problem (\ref{P1}) can represent BS association (denoted \(x_{ij}^{B}\)) and bandwidth allocation \(y_{ij}^{B}\) in these problems. 
The objective function \(f(\boldsymbol{x,y})\) could be throughput, spectrum efficiency, user satisfaction, fairness, latency, etc.
Taking the example of maximizing throughput, \(f_B(\boldsymbol{x}^B,\boldsymbol{y}^B)={\textstyle \sum_{i=1}^{N}}{\textstyle \sum_{j=1}^{M}} x_{ij}^{B}y_{ij}^{B}\log_2(1+\gamma_{ij})\), 
where \(\gamma_{ij}\) refers to the signal-to-noise ratio of the signal from user \(i\) at BS \(j\).
Hence, a typical bandwidth allocation problem can be formulated as follows: 
\begin{equation}
\begin{split}
&\max_{\boldsymbol{x}^B,\boldsymbol{y}^B} {\textstyle \sum_{i=1}^{N}}{\textstyle \sum_{j=1}^{M}} x_{ij}^B y_{ij}^B \log_2(1+\gamma_{ij})\\
&\text{s.t.}\left\{\begin{array}{lc}
{\textstyle \sum_{i=1}^{N}} x_{ij}^B y_{ij}^B \leq W_j \,\, ,\,\,\qquad\qquad\qquad  \forall {j}\in \mathcal{M};\\
\sum_{j=1}^{M}x_{ij}^B=1\,\, , \,\,\qquad\qquad\qquad\qquad \forall{i}\in \mathcal{N};\\
\sum_{j=1}^{M}x_{ij}^B y_{ij}^B \log_2(1+\gamma_{ij})\ge D_i,\quad \forall{i}\in \mathcal{N};\\
x_{ij}^B\in \left \{0,1\right \}\,\, , \qquad\qquad\qquad \forall{i}\in \mathcal{N},\forall{j}\in \mathcal{M};
\end{array}\right.
\end{split}
\end{equation}
where \(W_j\) is the total available bandwidth to be allocated by BS \(j\), 
and \(D_i\) is the minimum requirement of data rate for each user \(i\).

\textbf{Task Offloading:}
In task offloading problems, the goal is to optimally offload computational tasks from resource-constrained mobile terminals to more powerful remote servers. 
The goal in such problems is usually to  minimize latency, energy consumption, or computing costs, etc. 
Here, the binary variables can be mapped as task offloading decisions \(x_{ij}^T\), while continues variables can be computing resource allocation \(y_{ij}^{T1}\) and amount of computing task to offload \(y_{ij}^{T2}\).
Taking the example of minimizing energy consumption, the objective function can be \(f_T\left(\boldsymbol{x}^T,\boldsymbol{y}^{T1},\boldsymbol{y}^{T2}\right)={\textstyle \sum_{i=1}^{N}}{\textstyle \sum_{j=1}^{M}} \Big( \alpha E_{ij}^L\big((1- x_{ij}^T),y_{ij}^{T1},y_{ij}^{T2}\big) +\beta E_{ij}^O \big(x_{ij}^T,y_{ij}^{T1},y_{ij}^{T2}\big) \Big)\), 
and a typical task offloading problem can be formulated as: 
\begin{equation}
\begin{split}
&\min_{\boldsymbol{x}^T,\boldsymbol{y}^T} \,\,\, \ {\textstyle \sum_{i=1}^{N}}{\textstyle \sum_{j=1}^{M}} \Big(\alpha E_{ij}^L\big((1-x_{ij}^T),y_{ij}^{T1},y_{ij}^{T2}\big) 
\\ &\qquad\qquad\qquad\qquad+\beta E_{ij}^O\big(x_{ij}^T,y_{ij}^{T1},y_{ij}^{T2}\big)\Big)\\
&\text{s.t.}\quad  \left\{\begin{array}{lc}
{\textstyle \sum_{i=1}^{N}} x_{ij}^Ty_{ij}^{T1} \leq R_j ,  &\forall {j\in \mathcal{M}};\\
t\left(x_{i}^T,y_{i}^{T1},y_{i}^{T2}\right)\le T_{i},  &\forall {i\in \mathcal{N}};\\
0\le {\textstyle \sum_{j=1}^{M}} y_{ij}^{T2}\le 1 ,  &\forall{i\in \mathcal{N}};\\
x_{ij}^T\in \left \{0,1\right \} , &\forall{i\in \mathcal{N}}  , \forall{j\in \mathcal{M}};
\end{array}\right.
\end{split}
\end{equation}
where \(E_{ij}^L(\cdot)\) and \(E_{ij}^O(\cdot)\) represent the energy consumption functions for executing the task locally and on edge server, respectively.
\(t(\cdot)\) is the time function for task \(i\).
\(\alpha\) and \(\beta\) are energy consumption weights for local processing and offloaded processing, respectively.
In the constraints, \(R_j\) represents the total available computational resource at edge server \(j\), and \(T_i\) is the time limitation for task \(i\).

\textbf{Power Control:}
Power control problems deal with managing the power of transmitters (BSs for downlink or user devices for uplink) to optimize network quality-of-service. 
Here, variables \(x_{ij}\) and \(y_{ij}\) in problem (\ref{P1}) can be respectively mapped as BS association \(x_{ij}^{P}\) and power allocation \(y_{ij}^{P}\) in these problems.
The objective function \(f(\boldsymbol{x,y})\) could be interference, energy consumption, signal quality, network capacity, etc. 
Taking the example of minimizing energy consumption, \(f_P\left(\boldsymbol{x}^P,\boldsymbol{y}^P\right)={\textstyle \sum_{i=1}^{N}}{\textstyle \sum_{j=1}^{M}} y_{ij}^Px_{ij}^P\), and the problem can be formulated as follows: 
\begin{equation}
\begin{split}
&\min_{\boldsymbol{x}^P,\boldsymbol{y}^P} \,\,\, \ {\textstyle \sum_{i=1}^{N}}{\textstyle \sum_{j=1}^{M}} y_{ij}^Px_{ij}^P\\
&\text{s.t.}\quad  \left\{\begin{array}{lc}
\sum_{j=1}^{M} R_{ij}\left(y_{ij}^P\right) \ge D_{i} ,  &\forall{i\in \mathcal{N}}  ;\\
\sum_{j=1}^{M} x_{ij}^P=1 ,  &\forall{i\in \mathcal{N}};\\
\sum_{j=1}^{M} y_{ij}^P \leq P_{\textrm{max}} ,  & \forall {j\in \mathcal{M}};\\
x_{ij}^P\in \left \{0,1\right \} , &\forall{i\in \mathcal{N}}  , \forall{j\in \mathcal{M}};
\end{array}\right.
\end{split}
\end{equation}
where \(R\left(y_{ij}^P\right)\) is the data rate at power \(y_{ij}^P\), \(D_i\) is the minimum data rate requirement for user \(i\), and \(P_\textrm{max}\) is the maximum power available at BS \(j\).

\vspace{-5pt}
\subsection{{Challenges of Solving Problem (1) in Wireless Networks}}
The commonly formulated problem (\ref{P1}), in wireless networks, is generally intractable. 
The main difficulty lies in the binary variables, which make the objective function and constraints non-differentiable, and, thus, we cannot use a gradient-based approach to solve problem (\ref{P1}). 
In addition, the large-scale nature of wireless networks, characterized by dense devices and base stations, increases the size of the problem and leads to an exponential explosion in solution space. 
This exponential computational complexity makes it infeasible to evaluate every possible solution of the problem \cite{ref7}. 
The complexity of problem (\ref{P1}) also arises from the coupling of binary variables \(\boldsymbol{x}\) and continuous variables \(\boldsymbol{y}\), 
particularly the cross term \(\boldsymbol{xy}\). These cross terms are non-convex and commonly appear in wireless networking optimization problems. 
This interdependence means that optimizing \(\boldsymbol{x}\) while fixing \(\boldsymbol{y}\) (or vice versa) may not lead to an optimal solution, 
as the optimality of one set of variables directly affect the other. 
Furthermore, the diverse QoS requirements complicate the constraints, leading to the non-convexity of the problem caused by feasible region. 
The non-convexity of objective function and constraints makes it challenging for optimization to avoid local optima traps and ensure that the solutions are as close to the global optimum as possible. 


\section{A Learning-based Optimization Approach for Solving 0-1 Mixed Optimization Problems}
Our approach combines optimization theory with RL, using relaxed solutions to provide information for RL to efficiently deal with binary variables. 
In this section, we consider the case in which \(f(\boldsymbol{x,y})\) and \(g_l(\boldsymbol{x,y})_{l\in \mathcal{L}}\) are convex functions\footnote{We use the term ``convex" to refer to both convex and concave functions. By convention, we use ``convex" to encompass both types of functions, allowing us to streamline the discussion and apply convex analysis techniques uniformly.}, and the non-convexity of optimization problem is caused by the binary variables \(\boldsymbol{x}\). 
\textcolor{black}{We begin with this convexity assumption to first establish our core methodology in a theoretically tractable setting. This allows for a clear presentation of how relaxed solutions can guide the RL search policy. 
However, we acknowledge that many practical problems feature non-convexity beyond the binary variables. For example, objective functions for throughput maximization often contain non-convex terms coupling user association and resource allocation. Recognizing this challenge, we dedicate Section \uppercase\expandafter{\romannumeral 4} to extending our framework to realistic non-convex scenarios, where we employ techniques like convex envelopes and constraint transformation to adapt our approach.}

In our proposed approach, the binary variables are first relaxed to continuous variables, then we can easily find the optimal relaxed solution using convex optimization methods due to the convexity of functions \(f(\boldsymbol{x,y})\) and \(g_l(\boldsymbol{x,y})\).
When an episode is completed, the binary solution is then obtained through RL. Given the binary solution, an optimization-based method processes the continuous variables and obtains the objective function value. 
This value is then fed back into RL to update the policy for obtaining binary solutions.
Repeat until convergence or the termination condition is met.


\vspace{-10pt}
\subsection{Transformation to Sequential Decision Problems}

\begin{figure*}[t]
    \centering
    \includegraphics [scale=0.65,angle=0]{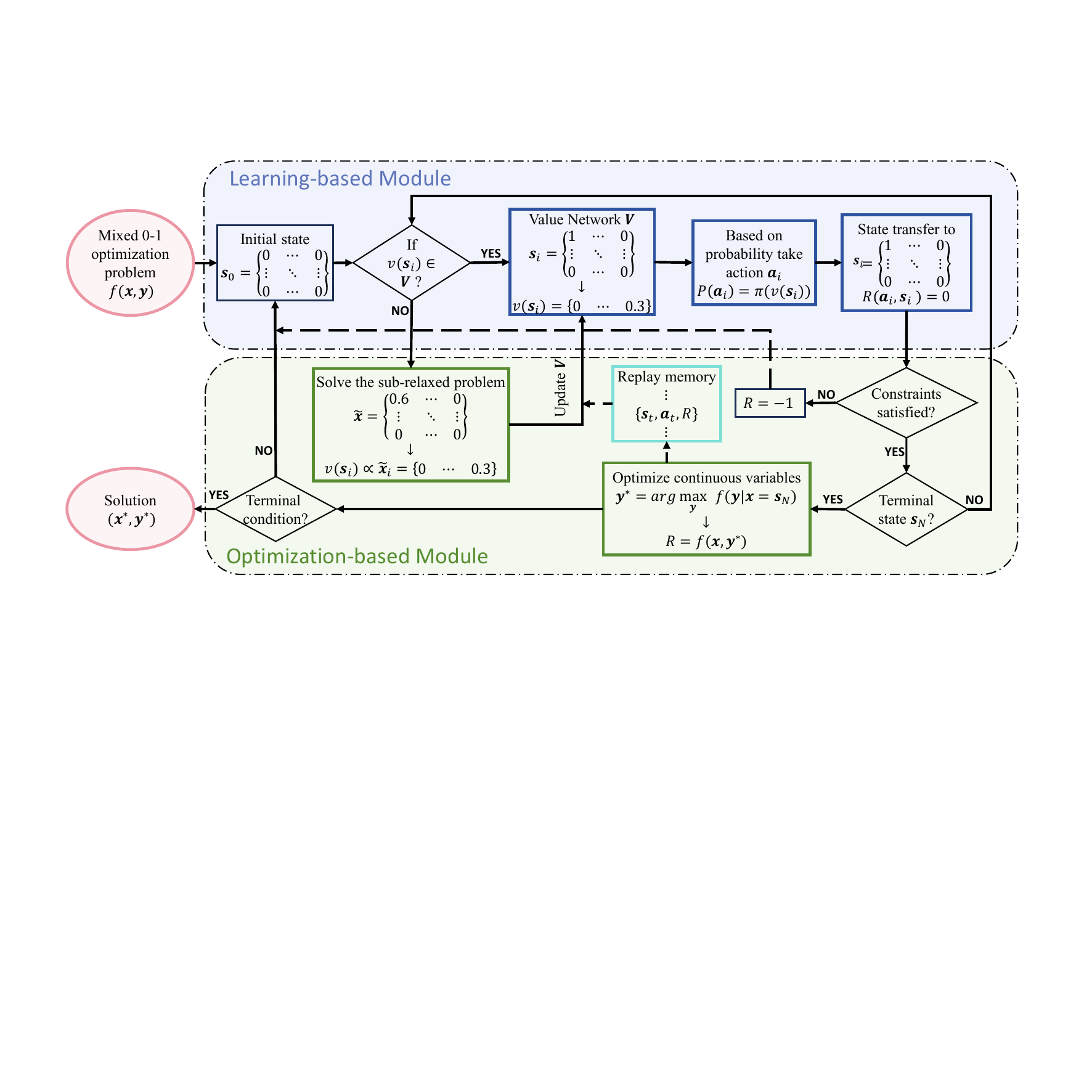}
    \caption{Flowchart of the proposed approach.}
    \label{proposed_approach}
\end{figure*}

The process of optimizing the binary variables in problem (\ref{P1}) can be equivalently represented as an MDP. 
By doing so, the optimal policy of the MDP would correspond to the optimal solution of the optimization problem.
The components of the transformed MDP framework are described next. 

\textit{State Space:} 
The state \(\boldsymbol{s}_i=\{0,1\}^{N\times M}\) represents the binary variable matrix \(\boldsymbol{x}\), where the values of first \(i\) rows have already been determined. 
Construct a 0-matrix \(\boldsymbol{s}_0=\{0\}^{N\times M}\) with the same dimension of the binary variables as the \textit{initial state}. 
For a problem with a variable size of \(N\times M\), \(N\) decisions are needed to determine which element of each row of \(\boldsymbol{x}\) is equal to \(1\), and each decision has \(M\) choices. 
Therefore, there are a total of \(M^N\) \textit{terminal states} \(\boldsymbol{s}_N\), i.e., the size of possible solution space, and there are \({\textstyle \sum_{i=1}^{N-1}} M^{i} \) \textit{intermediate states}.

\textit{Action Space:} 
There are \(M\) possible actions \(\{\boldsymbol{a}_i^1\cdots\boldsymbol{a}_i^M\}\) for each intermediate state \(\boldsymbol{s}_i\) representing an incomplete solution, where \(\boldsymbol{a}_i^k\in\{0,1\}^{M}\). 
The current state is extended by updating the next decision row as \(\boldsymbol{s}_i[i,:]= \boldsymbol{a}_i\).

\textit{{State Transition:}} 
The state transition is determined by the action we take. 
{Given a intermediate state} \(\boldsymbol{s}_i\), the system transitions to state \(\boldsymbol{s}_{i+1}\), {as described by:} 
\begin{equation}\label{p[s'|s,a]}
    P\left[\boldsymbol{s}_{i+1}|\boldsymbol{s}_i\right]=\pi(\boldsymbol{a}_i), 
\end{equation}
where \(\pi(\boldsymbol{a}_i)\) represents the probability that the search strategy takes action \(\boldsymbol{a}_i\). 

\textit{Reward Function: }The reward function can be defined as \(R(\boldsymbol{s}_{i},\boldsymbol a_i,\boldsymbol{s}_{i+1})=0\) for all transitions until a solution is reached. 
When a complete solution is achieved, \(R(\boldsymbol{s_{i}},\boldsymbol a_i,\boldsymbol{s}_{i+1})\) can be the objective function value of the 0-1 optimization problem.
\begin{equation}\label{R(s)}
    R(\boldsymbol{s}_{i},\boldsymbol{a}_i,\boldsymbol{s}_{i+1})=
\begin{cases}
 0,             & i<N;
 \\f\left(\boldsymbol{x}={\boldsymbol s}_N,\boldsymbol{y}'\right),  & i=N;
\end{cases}
\end{equation}
where  \(\boldsymbol{y}'=\arg\max_{\boldsymbol{y}} f(\boldsymbol{y}|\boldsymbol{x}=\boldsymbol{s}_N)\).

Given these mappings, it is evident that any terminal state of the MDP corresponds to a feasible solution to the binary part of a 0-1 mixed optimization problem. 
The process of obtain binary solution is shown as follow: 
\begin{equation*}
    \boldsymbol{s}_0{\overset{\boldsymbol {a}_0}{\rightarrow}}\boldsymbol{s}_1{\overset{\boldsymbol {a}_1}{\rightarrow}}\boldsymbol{s}_2 \cdots {\overset{\boldsymbol {a}_{N-1}}{\rightarrow}}\boldsymbol{s}_N=\boldsymbol{x}.
\end{equation*}
Due to the large state space and delayed rewards, RL is employed to solve the MDP. 
Once \textit{terminal state} \(\boldsymbol{s_N}\) is reached, determining the suboptimal allocation of continuous variables becomes straightforward, i.e.,  \(\boldsymbol{y}'=\arg\max_{\boldsymbol{y}} f(\boldsymbol{y}|\boldsymbol{x}=\boldsymbol{s}_N)\), s.t. constraints (1a). 
However, exploring all possibilities is not feasible due to the huge state space. 
In order to improve the efficiency of exploration, relaxed solutions are utilized as prior information to guide the RL exploration. 
\subsection{Relaxed Solutions as Priors for RL }
Given that \(f(\boldsymbol{x,y})\) and \(g_l(\boldsymbol{x,y}),\, \forall l \in \mathcal{L}\), in problem (\ref{P1}) are convex functions, 
the problem becomes a convex optimization problem after relaxing \(x_{ij} \in \{ 0,1 \}\) to \(\widetilde x_{ij} \in [0,1]\). 
\begin{align} \label{P2}
\quad \max_{\boldsymbol{{\widetilde x},y}}\quad & f(\boldsymbol{{\widetilde x},y})\\
\text{s.t.}\quad  &g_l(\boldsymbol{{\widetilde x},y}) \leq 0,\quad \forall{l\in \mathcal{L}};\tag{\ref{P2}{a}}\\
&\sum_{j=1}^{M}{\widetilde x}_{ij}=1,\quad \forall{i\in \mathcal{N}};\tag{\ref{P2}{b}}\\
&{\widetilde x}_{ij}\in [0,1],\quad \forall{i\in \mathcal{N}}, \forall{j\in \mathcal{M}}.\tag{\ref{P2}{c}}
\end{align}
The relaxed solutions to problem (\ref{P2}) can be easily obtained using convex optimization methods.
Then, the relaxed solutions are used to guide RL in prioritizing the exploration of regions with a higher likelihood of containing better solutions to the original problem. 
To formalize the concept of such regions in the solution space, we define the following term: 
\begin{definition}
A \emph{high potential zone (HPZ)}, is a subset of the possible solution space \(X_d \subseteq X\), has higher likelihood of containing suboptimal solutions compared to the feasible space: 
\begin{equation}
    \mathbb{E}\left[f(\boldsymbol{x}|\boldsymbol{y})|\boldsymbol{x}\in X_d\right]\ge \mathbb{E}[f(\boldsymbol{x},\boldsymbol{y})|\boldsymbol{x}\in X].
\end{equation}
\end{definition}



First, we present the following lemma \cite{Cantelli_inequality}, known as Cantelli's inequality, which provides a probabilistic bound on random variables. 

\begin{lemma} (from \cite{Cantelli_inequality})\label{lemma1}
    If \(Z\) is a random variable with mean \(\mu\) and variance \(\sigma^2\), then \(\forall a>0\),
    \begin{equation}\label{L1}
        P[Z\leq \mu -a]\leq {\frac{\sigma^2}{\sigma^2+a^2}}.
    \end{equation}
\end{lemma}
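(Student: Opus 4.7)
The plan is to prove Cantelli's inequality by a standard one-sided Chebyshev-style argument: reduce to a centered random variable, introduce an auxiliary shift parameter, apply Markov's inequality to a squared expression, and then optimize the shift parameter to obtain the tightest possible bound.

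First I would set $Y = Z - \mu$, so that $\mathbb{E}[Y]=0$ and $\mathrm{Var}(Y)=\mathbb{E}[Y^2]=\sigma^2$. The inequality to prove becomes $P[Y \leq -a] \leq \sigma^2/(\sigma^2+a^2)$, or equivalently $P[-Y \geq a] \leq \sigma^2/(\sigma^2+a^2)$. The key trick is to note that for any $u \geq 0$, the event $\{-Y \geq a\}$ is contained in the event $\{-Y + u \geq a + u\}$, and since both sides of the latter are positive (as $a>0$ and $u \geq 0$), squaring preserves the inclusion:
\begin{equation*}
\{-Y \geq a\} \subseteq \{(-Y+u)^2 \geq (a+u)^2\}.
\end{equation*}

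Next I would apply Markov's inequality to the non-negative random variable $(-Y+u)^2$:
\begin{equation*}
P\bigl[(-Y+u)^2 \geq (a+u)^2\bigr] \leq \frac{\mathbb{E}[(-Y+u)^2]}{(a+u)^2} = \frac{\sigma^2 + u^2}{(a+u)^2},
\end{equation*}
using $\mathbb{E}[Y]=0$ and $\mathbb{E}[Y^2]=\sigma^2$ to expand the numerator. This gives a whole family of upper bounds parameterized by $u \geq 0$, and the sharpest one comes from minimizing the right-hand side in $u$.

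The final step is the optimization: differentiating $(\sigma^2+u^2)/(a+u)^2$ with respect to $u$ and setting the derivative to zero yields the critical point $u^\star = \sigma^2/a$, which lies in the admissible range $u \geq 0$. Substituting $u^\star$ back and simplifying collapses the expression to $\sigma^2/(\sigma^2+a^2)$, which is the claimed bound. The main obstacle, if any, is the algebraic manipulation in this last substitution, and the motivational point of justifying why $u^\star = \sigma^2/a$ is indeed the minimizer (checking the second-order condition or arguing via the sign change of the derivative); this is mechanical but needs to be done carefully so that the resulting bound is exactly $\sigma^2/(\sigma^2+a^2)$ and not merely a looser constant.
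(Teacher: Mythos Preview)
Your argument is the standard, correct proof of Cantelli's inequality: center the variable, shift by an auxiliary parameter $u\ge 0$, apply Markov's inequality to the square, and optimize $u$ to get $u^\star=\sigma^2/a$. The algebra at the end indeed collapses to $\sigma^2/(\sigma^2+a^2)$, and the minimizer check is routine.

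Note, however, that the paper does \emph{not} supply its own proof of this lemma: it is stated with a citation (``from \cite{Cantelli_inequality}'') and then used as a black box inside the proof of Proposition~1. So there is no ``paper's proof'' to compare against; your derivation simply fills in what the paper left to the reference, and it does so correctly.
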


Then, we introduce the following proposition, to characterize the relationship between the relaxed solution and the HPZ in the context of a 0-1 mixed optimization problem.

\begin{proposition}\label{Propo1}
    For a 0-1 mixed optimization problem
\begin{equation*}
    \max\limits_{\boldsymbol{x}\in X} f(\boldsymbol{x,y}),\,X=\{0,\,1\}^{N\times M},
\end{equation*}
if the relaxed problem
\begin{equation*}
    \max\limits_{\boldsymbol{\widetilde x}\in \widetilde X} f(\boldsymbol{\widetilde x, y}), \widetilde X=[0,\,1]^{N\times M}
\end{equation*}
is convex, the neighborhood of the relaxed optimal solution \(U({\widetilde{\boldsymbol{x}} }^* ,d)\) is the HPZ of original problem, where \(d=\mathbb{E}\left[||\boldsymbol{x}-{\widetilde{\boldsymbol{x}} }^*|| , {\boldsymbol{x}}\in U\right]\).
\end{proposition}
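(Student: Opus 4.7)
The strategy is to equip $X=\{0,1\}^{N\times M}$ with a reference distribution (e.g., uniform), introduce the scalar random variable $Z=\|\boldsymbol{x}-\widetilde{\boldsymbol{x}}^*\|$ with mean $\mu_Z$ and variance $\sigma_Z^2$, and then compare $\mathbb{E}[f(\boldsymbol{x},\boldsymbol{y})\mid Z\le d]$ against $\mathbb{E}[f(\boldsymbol{x},\boldsymbol{y})]$. By construction, the event $\{\boldsymbol{x}\in U(\widetilde{\boldsymbol{x}}^*,d)\}$ coincides with $\{Z\le d\}$, and $d$ is pinned down implicitly by the self-consistency equation $d=\mathbb{E}[Z\mid Z\le d]$.

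First, I would use the convexity (concavity, in the maximization sense) of the relaxation to control how much $f$ decays as one moves away from $\widetilde{\boldsymbol{x}}^*$. Because $\widetilde{\boldsymbol{x}}^*$ maximizes $f$ over the convex superset $\widetilde X\supseteq X$, the first-order optimality condition $\nabla_{\boldsymbol{x}}f(\widetilde{\boldsymbol{x}}^*,\boldsymbol{y})^{\top}(\boldsymbol{x}-\widetilde{\boldsymbol{x}}^*)\le 0$ combined with a second-order Taylor expansion yields, for every $\boldsymbol{x}\in X$, an inequality of the form $f(\widetilde{\boldsymbol{x}}^*,\boldsymbol{y})-f(\boldsymbol{x},\boldsymbol{y})\le \Phi(\|\boldsymbol{x}-\widetilde{\boldsymbol{x}}^*\|)$ with $\Phi$ non-decreasing. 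Hence the non-negative gap $W:=f(\widetilde{\boldsymbol{x}}^*,\boldsymbol{y})-f(\boldsymbol{x},\boldsymbol{y})$ is stochastically monotone in $Z$: a smaller $Z$ implies, in distribution, a smaller $W$.

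Second, I would apply Cantelli's inequality (Lemma~\ref{lemma1}) to $Z$ to obtain $P[Z\le\mu_Z-a]\le \sigma_Z^2/(\sigma_Z^2+a^2)$, quantifying how much probability mass $Z$ places near $\widetilde{\boldsymbol{x}}^*$. Transferring this tail control to $W$ through the monotone coupling from the previous step and integrating against the conditional distribution yields $\mathbb{E}[W\mid Z\le d]\le \mathbb{E}[W]$, which rearranges to $\mathbb{E}[f(\boldsymbol{x},\boldsymbol{y})\mid \boldsymbol{x}\in U(\widetilde{\boldsymbol{x}}^*,d)]\ge \mathbb{E}[f(\boldsymbol{x},\boldsymbol{y})\mid \boldsymbol{x}\in X]$, i.e., the HPZ property of Definition~1.

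The main obstacle is the monotone coupling between $Z$ and $W$: distinct binary vectors at the same distance $Z=z$ from $\widetilde{\boldsymbol{x}}^*$ need not yield equal values of $W$, so $W$ is not a deterministic function of $Z$. I would handle this by showing, via concavity of $f$ along each ray emanating from $\widetilde{\boldsymbol{x}}^*$, that $W\mid Z=z_1$ is stochastically dominated by $W\mid Z=z_2$ whenever $z_1\le z_2$, then integrating against the Cantelli tail. If generic concavity proves too weak, a strong-concavity constant $m>0$ giving $W\ge (m/2)Z^2$, or a Lipschitz constant $L$ with $W\le LZ$, would allow an explicit moment calculation that closes the argument cleanly.
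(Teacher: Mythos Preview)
Your route diverges from the paper's in where Cantelli's inequality is applied. The paper works directly with the random variable $f(\widetilde{\boldsymbol{x}},\boldsymbol{y})$ restricted to two nested balls $\mathcal{O}_1\subset\mathcal{O}_2$ around $\widetilde{\boldsymbol{x}}^*$: it first asserts (as ``obvious'') that the smaller ball has larger mean $\widetilde\mu_1\ge\widetilde\mu_2$ and smaller variance $\widetilde\sigma_1\le\widetilde\sigma_2$, then applies Lemma~\ref{lemma1} to $f$ itself to obtain a lower confidence bound $P[f\ge\widetilde\mu-k\widetilde\sigma]\ge k^2/(1+k^2)$ in each ball, and finally compares the optimized bounds. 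You instead apply Cantelli to the distance $Z=\|\boldsymbol{x}-\widetilde{\boldsymbol{x}}^*\|$, obtaining $P[Z\le\mu_Z-a]\le\sigma_Z^2/(\sigma_Z^2+a^2)$. But this is an \emph{upper} bound on the mass near $\widetilde{\boldsymbol{x}}^*$ and does not feed into the conditional inequality $\mathbb{E}[W\mid Z\le d]\le\mathbb{E}[W]$ you are after; that inequality would follow immediately from the stochastic monotonicity of $W$ in $Z$ if you had it, with no tail bound needed. So in your plan Cantelli is doing no real work, whereas in the paper it is the device that converts the asserted mean/variance ordering into a comparison of lower confidence bounds.

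The genuine gap is therefore the stochastic-dominance step you flag as ``the main obstacle.'' Concavity along rays only says that, in a \emph{fixed} direction, moving farther from $\widetilde{\boldsymbol{x}}^*$ cannot increase $f$; it says nothing about comparing $W$ across \emph{different} directions at different radii, which is exactly what $W\mid Z=z_1\preceq W\mid Z=z_2$ requires. The fallback bounds $(m/2)Z^2\le W\le LZ$ sandwich $W$ between two increasing functions of $Z$, but overlapping sandwiches at $z_1<z_2$ do not give dominance of the conditional laws, nor do they give the expectation inequality without further distributional assumptions. This is essentially the same unproven monotonicity the paper hides in its ``it is obvious that $\widetilde\mu_1\ge\widetilde\mu_2$ and $\widetilde\sigma_1\le\widetilde\sigma_2$'' line, so if you want to repair the argument you should either (i) adopt the paper's route and argue those two inequalities directly for concave $f$ over nested balls, then apply Cantelli to $f$, or (ii) impose enough regularity (e.g., radial symmetry or a quadratic model) that $W$ really is a nondecreasing function of $Z$ alone.
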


\begin{proof}

Assume there are two subsets \(\mathcal{O}_1 \left({\widetilde{\boldsymbol{x}} }^* ,d_1\right)\) and \(\mathcal{O}_2\left({\widetilde{\boldsymbol{x}} }^* ,d_2\right)\), where \(d_1\leq d_2\), each contains at least 1 feasible binary solution.
Let \({\widetilde \mu}_1\) and \({\widetilde \sigma}_1\) represent the mean and standard deviation of \(f\left(\boldsymbol{\widetilde x, y}\right)|\boldsymbol{\widetilde{\boldsymbol{x}} }\in \mathcal{O}_1\left({\widetilde{\boldsymbol{x}} }^* ,d_1\right)\), and let \({\widetilde \mu_2}\) and \({\widetilde \sigma_2}\) represent the mean and standard deviation of \(f\left(\boldsymbol{\widetilde {x}} , \boldsymbol{y}\right)|\boldsymbol{\widetilde x}\in \mathcal{O}_2\left({\widetilde{\boldsymbol{x}} }^* ,d_2\right)\).

It is obvious that  
\({\widetilde \mu}_1 \ge {\widetilde \mu_2}\) and \({{\widetilde \sigma}_1}\leq {{\widetilde \sigma_2}}\), 
thus 
\begin{equation}
    \label{p1_1}
    {\widetilde \mu}_1-k{\widetilde \sigma}_1 \ge {\widetilde \mu_2}-k{\widetilde \sigma_2},\quad k\ge 0.
\end{equation}

According to Lemma \ref{lemma1}, for any distribution random variable \(Z\) with \(\mu\) and \(\sigma\), let \(a=k\sigma\), then we have
\begin{align}
    P\left[Z\ge \mu -k\sigma\right]&\ge 1-P\left[Z\leq \mu -k\sigma\right]\label{L1-1}
    \\&\ge {\frac{k^2}{1+k^2}}.\nonumber
\end{align}

According to (\ref{L1-1}), there is 
\begin{align}
    P\left[f(\boldsymbol{\widetilde x,y})\ge{\widetilde \mu}_1-k_1{\widetilde \sigma}_1|\widetilde {\boldsymbol{x}}\in \mathcal{O}_1\right] \ge k_1^2/1+k_1^2,\\
    P\left[f(\boldsymbol{\widetilde x,y})\ge{\widetilde \mu}_2-k_2{\widetilde \sigma}_2|\widetilde {\boldsymbol{x}}\in \mathcal{O}_2\right] \ge k_2^2/1+k_2^2.
\end{align}

For \(\mathcal{O}_1\), consider independent functions \(f_1,f_2,...,f_n\) with the same relaxation value distribution \(\widetilde \mu_1\) and \(\widetilde \sigma_1\), 
denote radom variables \(F_1^{(1)},F_1^{(2)},\dots,F_1^{(n)}\) as the objective values from these functions, with \(\mu^{(1)}_1,\mu^{(2)}_1,\dots,\mu^{(n)}_1\) and \(\sigma^{(1)}_1,\sigma^{(2)}_1,\dots,\sigma^{(n)}_1\). Here is
\begin{equation}
    P\left[\mu^{(i)}_1\ge (\widetilde\mu_1-k\widetilde\sigma_1)\right]\ge k^2/1+k^2.
\end{equation}
When the sample problems capacity \(n\) is large enough:
\begin{equation}
    \lim_{n\to \infty}P\left[ \left|\overline{{\mu}^n_1}-{1\over n}\sum_{i=1}^{n} \mathbb{E}(\mu^{(i)}_1) \right| >\varepsilon\right]=0,
\end{equation}
where \(\overline{{\mu}^n_1}={1\over n}\left(\mu^{(1)}_1+\mu^{(2)}_1+\dots+\mu^{(n)}_1\right)\) and \(\varepsilon>0\).
\\Thus, the mean expectation of discrete objective values in \(\mathcal{O}_1\) has lower confidence bound: 
\begin{equation}
    \lim_{n\to \infty}\overline{{\mu}^n_1}{\overset{p}{\longrightarrow}}{1\over n}{\textstyle \sum_{i=1}^{n}} \mathbb{E}\left[\mu^{(i)}_1\right]\ge {\left(\widetilde\mu_1-k\widetilde\sigma_1\right)k^2\over{1+k^2}}.
\end{equation}
The same with \(\mathcal{O}_2\): 
\begin{equation}
    \lim_{n\to \infty}\overline{{\mu}^n_2}{\overset{p}{\longrightarrow}}{1\over n}{\textstyle \sum_{i=1}^{n}} \mathbb{E}\left[\mu^{(i)}_2\right]\ge {\left(\widetilde\mu_2-k\widetilde\sigma_2\right)k^2\over{1+k^2}}.
\end{equation}
There exist \(k_1^*\ge0\) and \(k_2^*\ge0\) to let these two lower bounds achieve their maximum values respectively, and with (\ref{p1_1}) there is:
\begin{equation}
    {\left(\widetilde\mu_1-k_1^*\widetilde\sigma_1\right){k_1^*}^2\over{1+{k_1^*}^2}} \ge {(\widetilde\mu_1-k_2^*\widetilde\sigma_1){k_2^*}^2\over{1+{k_2^*}^2}} \ge {(\widetilde\mu_2-k_2^*\widetilde\sigma_2){k_2^*}^2\over{1+{k_2^*}^2}}.
\end{equation}
Which means \(f(\boldsymbol{x,y})|\boldsymbol{x}\in \mathcal{O}_1\) has a better lower confidence bound of potential solution than \(f(\boldsymbol{x,y})|\boldsymbol{x}\in \mathcal{O}_2\).
\\Therefore, the neighborhood of relaxed optimal solution \(U({\widetilde{\boldsymbol{x}}}^*,d)\) is the HPZ for a 0-1 mixed optimization problem.
\end{proof}

According to the proof of Proposition \ref{Propo1}, we have: 
\begin{equation}
    \mathbb{E}\big[f(\boldsymbol{x,y})|x\in U({\widetilde{\boldsymbol{x}}}^* , d_{i})\big] \ge \mathbb{E}\big[f(\boldsymbol{x,y})|x\in U({\widetilde{\boldsymbol{x}} }^* , d_{i+1})\big],
\end{equation}
where \(U\left({\widetilde x}^* , d_{i}\right)\subseteq U\left({\widetilde x}^* , d_{i+1}\right)\).
Thus, it is justified to utilize the relaxed solutions as prior information for RL to guide the design of the search policy. 
This search policy prioritizes exploring potential solutions within the HPZ. 
In an HPZ, our proposed approach provides a probabilistic lower bound on solution quality. 
Specifically, the approach guarantees a lower confidence bound on the expected objective value of the solutions. 
An HPZ with a better lower confidence bound is given more opportunity to be explored. 
Specifically, for two subsets \(U({\widetilde x}^* , d_{i})\) and \(U({\widetilde x}^* , d_{i+1})\) the search policy assigns higher exploration probability density to the HPZ with a better confidence bound, as expressed by the following equation: 
\begin{equation}
    \pi\left[\boldsymbol{x}| \boldsymbol{x}\in U({\widetilde{\boldsymbol{x}}}^* , d_{i}) \right]\ge \pi\left[\boldsymbol{x}|\boldsymbol{x}\in U({\widetilde{\boldsymbol{x}}}^* , d_{i+1}) \right].
\end{equation}
This ensures that the search process focuses on regions with a higher likelihood of containing suboptimal solutions, thereby improving the efficiency of RL exploration. 

\textcolor{black}{
The practical implementation of this guided search policy is achieved by using the specific values from the relaxed solution to initialize the state-value function for the RL agent. 
Specifically, at any decision step, the algorithm solves a relaxed sub-problem to obtain a vector of continuous values, corresponding to each possible action. 
These values are used to calculate the action-selection probabilities in our exploration strategy. 
This method creates a soft probabilistic neighborhood by imposing a probability distribution over the action space, rather than defining hard boundaries. 
It effectively biases the search towards the promising regions indicated by the HPZ while still allowing for the exploration of less likely paths.
}


\vspace{-5pt}
\subsection{Exploration Strategy in RL}
Given the relaxed solutions, we use them as priors to guide the RL agent in exploring the HPZs in the solution space. 
In each iteration of the RL process, when a terminal state \(\boldsymbol{s}_N\) is reached, we compute the value of the optimization objective, and use this value as reward: 
 \begin{equation}
 R= \max_{\boldsymbol{y}} f(\boldsymbol{y}|\boldsymbol{x}=\boldsymbol{s}_N).
 \end{equation}
This reward is then propagated back through the previous states to update the search policy, ensuring that the agent learns which areas of the solution space yield better results. 

From (\ref{p[s'|s,a]}), we can simplify the transition probability as: 
\begin{equation}\label{p[s']}
        P(\boldsymbol{s}_{i+1}|\boldsymbol{s}_{i})=P(\boldsymbol{a}_{i}|\boldsymbol{s}_{i}).
\end{equation}
Thus, in our RL process, according to (\ref{R(s)}) and (\ref{p[s']}), the action-value function can be written as: 
\begin{equation}
\begin{split}
    q(\boldsymbol{s}_i,\boldsymbol{a}_i)&=R_{\boldsymbol{s}_i}^{\boldsymbol{a}_i}+\gamma \sum_{\boldsymbol{s}_{i+1}} P\left[\boldsymbol{s}_{i+1}|\boldsymbol{s}_i,\boldsymbol{a}_i\right]v(\boldsymbol{s}_{i+1})\\
    &=\gamma v(\boldsymbol{s}_{i+1}),
\end{split}
\end{equation}
where \(\gamma\) is discount factor and \(v(\boldsymbol{s}_i)\) is the value of state \(\boldsymbol{s}_i\).
Given this, the policy for selecting actions will be given by:
\textcolor{black}{
\begin{equation}\label{sele-action}
    \pi(\boldsymbol {a}_i|\boldsymbol{s}_i)=P(\boldsymbol{a}_i|\boldsymbol{s}_i)=\frac{e^{v(\boldsymbol{s}_{i+1})}}{\sum_{\boldsymbol{s}_{i+1}} e^{v(\boldsymbol{s}_{i+1})}}.
\end{equation}
}
As the agent interacts within the solution space, its policy is updated based on the objective values it receives. 
Next, we state the following proposition to establish the convergence of the policy.

\begin{proposition}
Suppose there exist feasible solutions in the feasible domain and the objective value is bounded. 
Then, as the number of iterations \(k\) tends to infinity, the objective value converges, i.e.,
\begin{equation*}
    \lim_{k\rightarrow \infty} f(\boldsymbol{x}_{k+1},\boldsymbol{y}'_{k+1})-f(\boldsymbol{x}_{k},\boldsymbol{y}'_{k})\leq \epsilon, \forall{\epsilon\ge0}.
\end{equation*}
\end{proposition}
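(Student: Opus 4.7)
My plan is to combine boundedness with monotonicity in expectation and invoke the monotone convergence theorem. First I would make the state space explicit: the MDP constructed in Section III has a finite number of states, $\sum_{i=1}^{N}M^{i}$, a finite action set of size $M$ at each intermediate state, and a reward $R(\boldsymbol{s}_N)=f(\boldsymbol{x}=\boldsymbol{s}_N,\boldsymbol{y}')$ that is well-defined (by the existence of feasible solutions, $\boldsymbol{y}'=\arg\max_{\boldsymbol{y}}f(\boldsymbol{y}\mid\boldsymbol{x}=\boldsymbol{s}_N)$ is attainable) and bounded above (by hypothesis). This places the sequence $\{f(\boldsymbol{x}_k,\boldsymbol{y}'_k)\}_k$ in a compact range and lets every value estimate $v(\boldsymbol{s}_i)$ be uniformly bounded.

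Second, I would show that the sequence of expected returns $\mathbb{E}_{\pi_k}[f(\boldsymbol{x},\boldsymbol{y}')]$ under the evolving policy $\pi_k$ is monotone non-decreasing in $k$. The update in (21), $\pi(\boldsymbol{a}_i\mid\boldsymbol{s}_i)=v(\boldsymbol{s}_{i+1})/\sum_{\boldsymbol{s}_{i+1}}v(\boldsymbol{s}_{i+1})$, reassigns probability mass in proportion to backed-up state values $v(\boldsymbol{s}_i)=\gamma^{N-i}v(\boldsymbol{s}_N)$ along the visited trajectory. When an episode yields a terminal reward, this proportional rule places strictly more mass on the actions that produced larger returns, so expected return under $\pi_{k+1}$ is no smaller than under $\pi_k$. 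This is a restricted instance of the policy improvement principle specialized to the tree-structured MDP of Section III-A.

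Third, a bounded monotone real sequence converges by the monotone convergence theorem, so $\mathbb{E}_{\pi_k}[f]\to L$ for some finite $L$ and the successive differences tend to $0$. Hence for every $\epsilon>0$ there exists $K$ with $\mathbb{E}_{\pi_{k+1}}[f]-\mathbb{E}_{\pi_k}[f]\leq\epsilon$ for all $k\geq K$, which gives the claimed bound (with the understanding that $f(\boldsymbol{x}_k,\boldsymbol{y}'_k)$ denotes the policy's induced expected return at iteration $k$).

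The main obstacle will be making the policy improvement step rigorous for the specific update (21), because it is a softmax-like rule rather than the standard greedy improvement operator, and the episode-level stochasticity can temporarily decrease realized rewards. I would sidestep this by tracking the best-so-far iterate $\tilde{f}_k=\max_{j\leq k}f(\boldsymbol{x}_j,\boldsymbol{y}'_j)$, which is trivially non-decreasing and bounded above, hence convergent, so that $\tilde{f}_{k+1}-\tilde{f}_k\to 0$. A secondary concern is guaranteeing that every terminal state retains positive visitation probability, so that no portion of the HPZ is permanently excluded; this follows from the strictly positive initialization of $v$ combined with the normalized form of (21), which prevents collapse onto a degenerate deterministic policy before the bounded sequence has stabilized.
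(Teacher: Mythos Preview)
Your primary line---finite MDP, policy-improvement monotonicity of the expected return, then monotone convergence from boundedness---is exactly the paper's argument. The paper asserts that under the update the probability of transitioning to constraint-violating states decreases while the probability of the best feasible action increases, concludes $V_{\pi_{k+1}}(\boldsymbol{s})\geq V_{\pi_k}(\boldsymbol{s})$, passes directly to $f(\boldsymbol{x}_{k+1},\boldsymbol{y}'_{k+1})\geq f(\boldsymbol{x}_k,\boldsymbol{y}'_k)$, and invokes boundedness. Your observation that the proportional rule (21) is a softmax-type update rather than a greedy operator, so the textbook policy-improvement inequality needs additional justification, is a legitimate point the paper glosses over. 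The fallback you propose---tracking the best-so-far iterate $\tilde{f}_k=\max_{j\leq k}f(\boldsymbol{x}_j,\boldsymbol{y}'_j)$---is a genuinely different and more elementary route that the paper does not take; it makes monotonicity automatic and sidesteps the softmax issue entirely, at the cost of proving convergence of a slightly different (though arguably more operationally meaningful) quantity than the one the paper states.
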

\begin{proof}
Let \(\pi_{k+1}\) be the updated policy after an iteration, and \(\pi_k\) represent the previous policy. 
For states that do not satisfy the constraints, the probability of selecting actions that lead to these states decreases after the policy update:
\begin{equation}\label{P[sT]}
    P_{\pi_{k+1}}\big[\boldsymbol{a}_i|\boldsymbol{s}_i,\boldsymbol{s}_{i+1}\in S_T\big] \leq P_{\pi_k}\big[\boldsymbol{a}_i|\boldsymbol{s}_i,\boldsymbol{s}_{i+1}\in S_T\big],
\end{equation}
where \(S_T\) is the set of states which are not satisfied with constraints, and \(V(\boldsymbol{s}_i|\boldsymbol{s}_{i}\in S_T)<0\). 
For states where the constraints are satisfied, the probability of selecting suboptimal actions increases:
\begin{equation}\label{P[sS]}
    P_{\pi_{k+1}}\big[\boldsymbol{a}_i^*|\boldsymbol{s}_i,\boldsymbol{s}_{i+1}\notin S_T\big]\geq P_{\pi_k}\big[\boldsymbol{a}_i^*|\boldsymbol{s}_i,\boldsymbol{s}_{i+1}\notin S_T\big], 
\end{equation}
where \(\boldsymbol{a}_i^*=\arg\max_{\boldsymbol{a}_i} q(\boldsymbol{s}_i,\boldsymbol{a}_i)\). 

The value of state \(\boldsymbol{s}_i\) under policy \(\pi\) is:
\begin{equation}
\begin{aligned}\label{V(s)}
     V_{\pi}(\boldsymbol{s})&=\sum_{\boldsymbol{a}_i} \pi(\boldsymbol {a}_i|\boldsymbol{s}_i)\big[R(\boldsymbol{s}_{i+1})+\gamma V_{\pi}(\boldsymbol{s}_{i+1})\big].
\end{aligned}
\end{equation}
From (\ref{p[s']}), it can be rewrite as: 
\begin{equation}
\begin{aligned}\label{V(s)_}
     V_{\pi}(\boldsymbol{s}_i) &=\sum_{\boldsymbol{a}_i}P_{\pi}\big[\boldsymbol{a}_i|\boldsymbol{s}_i,\boldsymbol{s}_{i+1}\in S_T\big]\big[R(\boldsymbol{s}_i)+\gamma V_{\pi}(\boldsymbol{s}_{i+1})\big]
     \\&+\sum_{\boldsymbol{a}_i}P_{\pi}\big[\boldsymbol{a}_i|\boldsymbol{s}_i,\boldsymbol{s}_{i+1}\notin S_T\big]\big[R(\boldsymbol{s}_i)+\gamma V_{\pi}(\boldsymbol{s}_{i+1})\big].
\end{aligned}
\end{equation}
Substitute (\ref{P[sT]}), and (\ref{P[sS]}) into (\ref{V(s)_}), we have: 
\begin{equation}
     V_{\pi_{k+1}}(\boldsymbol{s})\geq V_{\pi_k}(\boldsymbol{s}), 
\end{equation}
which means the value function is monotonically increasing. 
Therefore, the objective value is non-decreasing as the number of iterations increases: 
\begin{equation}
     f(\boldsymbol{x}_{k+1},\boldsymbol{y}'_{k+1})\ge f(\boldsymbol{x}_{k},\boldsymbol{y}'_{k}).
\end{equation}
Since the objective value is bounded, then we have:
\begin{equation*}
    \lim_{k\rightarrow \infty} f(\boldsymbol{x}_{k+1},\boldsymbol{y}'_{k+1})-f(\boldsymbol{x}_{k},\boldsymbol{y}'_{k})\leq \epsilon, \forall{\epsilon\ge0}.
\end{equation*}
\end{proof}
\vspace{-10pt}
Proposition 2 indicates that by updating the searching policy based on the rewards obtained from the objective values, the agent will eventually converge. 
The exploration strategy presented in this section allows the RL agent to efficiently explore the HPZs in the solution space. 
The framework of the proposed approach is illustrated in Fig.~\ref{proposed_approach} and is summarized in Algorithm \ref{alg:1}.
\textcolor{black}{
If an action at any step leads to a state that violates a problem constraint, the episode terminates prematurely, and a penalty reward of \(R=-1\) is assigned. This teaches the agent to avoid such paths. For all valid, non-terminal steps within an ongoing episode, the immediate reward is \(R=0\). This ensures the agent focuses on the final outcome rather than the path length.
}

\begin{algorithm}[t]
    \caption{RL-based Optimization Algorithm for 0-1 Mixed Problem}
    \label{alg:1}
    \begin{algorithmic}[1]
    \STATE  Initialize states value \(V(\boldsymbol{s})\); initialize replay buffer \(D\) with capacity \(d\); and let \(m = 0\).
    
    \STATE \textbf{Repeat}(for each episode):
    \STATE \quad Initialize \(\boldsymbol{s}\leftarrow \{0\}^{N\times M},\,i=0\).
    \STATE \quad \textbf{Repeat}(for each step of an episode):
    \STATE \qquad  If \(\boldsymbol{s}\notin \textrm{dom}(V(\boldsymbol{s}))\):
    \STATE \qquad\quad  Solve relaxed sub-problem: 
         \\\qquad\qquad \(\boldsymbol{\widetilde x}=\underset{\boldsymbol{\widetilde x}\in \boldsymbol{\widetilde X}}{\arg\max}  f(\boldsymbol{y},\widetilde{\boldsymbol{x}}|\boldsymbol{s}).\)
    \STATE \qquad\quad  Initialize \(V(\boldsymbol{s})=\boldsymbol{\widetilde x}\).
    
    \STATE \qquad For a given state \(\boldsymbol{s}\), generate an action \(\boldsymbol a\leftarrow\pi_v(\boldsymbol a|\boldsymbol{s})\),
         \\\qquad using policy derived form \(V(\boldsymbol{s})\).
    \STATE \qquad Update \(\boldsymbol{s}\leftarrow \boldsymbol{s}+\boldsymbol a\),\(i=i+1\).
    \STATE \quad \textbf{Until:} \(\boldsymbol{s}\) is a terminal state or constraints cannot be satisfied.
    \STATE \quad Reward 
                \begin{equation*}
                    R=
                    \begin{cases}
                     -1,      &\boldsymbol{s}\text{ is not a terminal state};
                     \\\max_y f(\boldsymbol{y}|\boldsymbol{x=s}),  & \boldsymbol{s}\text{ is a terminal state}.
                    \end{cases}
                \end{equation*}
    \STATE \quad Update replay buffer D and states value 
    \(V(\boldsymbol{s}_i)\leftarrow \{ V(\boldsymbol{s}_i),\,R\}_{i \in \mathcal{N}}.\)
    \STATE \quad \(m = m+1\).
    \STATE \textbf{Until:} converges within a prescribed accuracy or a maximum number of iterations has reached.
    \end{algorithmic}
\end{algorithm}

The complexity of the approach can be analyzed by considering both the computation of HPZs and the iterative RL process. 
First, obtaining the HPZs involves solving a relaxed optimization problem, which is a convex optimization problem. 
The complexity of solving this convex relaxation, assuming the use of standard convex optimization algorithms, is typically \(O\big((N\times M)^3\big)\) \cite{boyd2004convex}. 
Then, we define the number of states as \(|S|\), 
the number of actions as \(|A|\), the complexity of a single RL policy update is \(O\left(|S|\cdot|A|\right)\).
Over \(E\) iterations, the overall complexity becomes \(O\big(T\cdot(N\times M)^3+T\cdot|S|\cdot|A|\big)\). 
Our approach offers a potential improvement by using relaxed solutions to focus exploration, which can reduce the search space compared to blind exploration. 
By guiding the agent toward HPZs, we expect a faster convergence to suboptimal solutions. 
This makes our approach computationally more efficient in cases where the solution space is large, compared to traditional RL methods that explore the entire solution space more uniformly.





{
  \widowpenalty=0
  \clubpenalty=0

\section{Extensions to Non-convex Scenarios}
In wireless network communication scenarios, there are many cases in which the objective function and/or constraints are non-convex.
For example, optimizing the throughput via power control in a multi-user environment often leads to a non-convex objective function because of the interference between transmitters.
the relationship between transmit power and interference is non-linear. 
Additionally, constraints related to QoS requirements can often be non-convex, making the optimization problems challenging.
In this section, we discuss how to apply our proposed optimization approach to the problems with non-convex objective function and/or constraints.

\vspace{-15pt}
\subsection{Non-Convex Objective Function Problems}
{Optimization problems such as bandwidth allocation, power control, energy efficiency maximization, and multi-carrier resource allocation typically have non-convex objective functions. 
This is mainly due to the presence of cross terms or fractional terms.} 
Let \(\hat{f}(\boldsymbol{x,y})\) be a non-convex objective function. 
The proposed solution in Section \uppercase\expandafter{\romannumeral 3} cannot be directly applied since it is challenging to obtain the optimal relaxed solutions considering the presence of multiple local optimum and/or saddle points of \(\hat{f}(\widetilde{\boldsymbol{x}},\boldsymbol{y})\). 
To extend our proposed solution to this case, one straightforward way is using the convex envelope \(\textrm{conv}(\hat{f})(\widetilde{\boldsymbol{x}},\boldsymbol{y})\) to approximate \(\hat{f}(\widetilde{\boldsymbol{x}},\boldsymbol{y})\). If the convex envelope is difficult to obtain, another efficient way is estimating \(\hat{f}(\boldsymbol{x,y})\) with numerical estimation methods.
}
The double conjugate is commonly employed to transform a non-convex function into a convex approximation. 
Formally, the double conjugate \(\hat{f}^{**}\) of function \(\hat{f}\), is defined as: 
\begin{align}
    \hat{f}^{**}(\boldsymbol{z})=\sup_{\boldsymbol{z}^*} {\{\left\langle \boldsymbol{z}^*,\boldsymbol{z} \right\rangle - \hat{f}^*(\boldsymbol{z}^*) \}},
\end{align}
where \( \hat{f}^*(\boldsymbol{z}^*)=\sup_{\boldsymbol{z}} {\{\left\langle \boldsymbol{z}^*,\boldsymbol{z} \right\rangle - \hat{f}(\boldsymbol{z}) \}}.\)
We can reverse it to upper convex envelope for maximization problem
\begin{align}
    \hat{f}_u^{**}(\boldsymbol{z})=\inf_{\boldsymbol{z}^*} {\{-\left\langle \boldsymbol{z}^*,\boldsymbol{z} \right\rangle - \hat{f}_u^*(\boldsymbol{z}^*) \}},
\end{align}
where \( \hat{f}_u^*(\boldsymbol{z}^*)=\inf_{\boldsymbol{z}} {\{-\left\langle \boldsymbol{z}^*,\boldsymbol{z} \right\rangle - \hat{f}(\boldsymbol{z}) \}}.\)
\(\hat{f}_u^{**}\) is the least upper semi-continuous upper convex function that overestimates. 

Lemma \ref{lemma_2} reveals that \(\hat{f}^{**}\) is the convex envelope of \(\hat{f}\), providing the closest convex approximation. 
Then, Proposition \ref{proposition_3} demonstrates that the double conjugate does not move the global optimum of \(\hat{f}\) while being convex and more tractable for optimization.

\begin{lemma} \label{lemma_2}
{(Fenchel–Moreau–Rockafellar Theorem \cite{biconjugate_book})} For every function \(f\), whenever \(f\) admits a continuous affine minorant, there is 
\begin{equation}
    f^{**}= \textrm{conv}(f), 
\end{equation} 
where \(\textrm{conv}(f):=\sup\{g: g \text{ is convex, semi-continuous,}\) \(\text{ and } g\leq f\}\).

The same holds true for \(f_u^{**}\).
\end{lemma}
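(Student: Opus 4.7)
My plan is to prove the identity $f^{**} = \textrm{conv}(f)$ by establishing the two inequalities separately, relying on three elementary properties of the Legendre--Fenchel conjugate: (i) for any function $h$, the conjugate $h^*$ is convex and lower semi-continuous, being a pointwise supremum of affine functions in its argument; (ii) one always has $h^{**} \le h$, directly from the definition of the double supremum; and (iii) conjugation reverses order, so $g \le f$ implies $g^* \ge f^*$ and hence $g^{**} \le f^{**}$.

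For the $\le$ direction, I would simply observe that $f^{**}$ is itself convex and lower semi-continuous by (i) applied with $h = f^*$, and that $f^{**} \le f$ by (ii). Thus $f^{**}$ belongs to the family of functions over which the supremum defining $\textrm{conv}(f)$ is taken, giving $f^{**} \le \textrm{conv}(f)$. The hypothesis that $f$ admits a continuous affine minorant enters here only to guarantee that $f^*$ is proper, so $f^{**}$ does not degenerate to the constant $-\infty$ everywhere, which would make the identification vacuous.

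For the reverse inequality, I would take an arbitrary convex lower semi-continuous minorant $g \le f$ and reduce the claim to the identity $g = g^{**}$, since property (iii) then gives $g = g^{**} \le f^{**}$; taking the supremum over all such $g$ produces $\textrm{conv}(f) \le f^{**}$. Showing $g = g^{**}$ is the \emph{main obstacle}, and it is obtained via the Hahn--Banach separation theorem: the epigraph $\textrm{epi}(g)$ is closed and convex because $g$ is convex and lower semi-continuous, so any point $(x_0, t)$ lying strictly below $\textrm{epi}(g)$ can be strictly separated from it by a closed affine hyperplane. Rewriting each such hyperplane as an affine minorant of $g$ shows that $g$ equals the pointwise supremum of all its affine minorants, which is precisely $g^{**}$ unpacked from the definition of the conjugate.

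Finally, the companion statement for the upper envelope $f_u^{**}$ follows by applying the same argument to $-f$ and reversing all inequalities, converting lower semi-continuous convex minorants of $-f$ into upper semi-continuous concave majorants of $f$, so no separate derivation is needed.
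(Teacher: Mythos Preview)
Your argument is the standard textbook proof of the Fenchel--Moreau theorem and is essentially correct: the inequality $f^{**}\le\textrm{conv}(f)$ follows because $f^{**}$ is itself a convex lower semi-continuous minorant of $f$, and the reverse inequality reduces to the identity $g=g^{**}$ for convex lower semi-continuous $g$, which is obtained via Hahn--Banach separation of the closed convex epigraph from points below it. The duality $f\mapsto -f$ handles the upper envelope $f_u^{**}$.

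There is, however, nothing to compare against: the paper does not supply a proof of Lemma~\ref{lemma_2}. It is stated as a citation of a classical result from the reference \cite{biconjugate_book} and used as a black box in the proof of Proposition~\ref{proposition_3}. So your proposal is not an alternative to the paper's approach; it simply fills in what the paper deliberately omitted. If anything, one could note that the separation step requires $g$ to be proper (or at least not identically $-\infty$), which is guaranteed here because the affine minorant of $f$ is itself a member of the family defining $\textrm{conv}(f)$ and minorizes every $g$ in that family---you gestured at this in the $\le$ direction but it is equally relevant for the $\ge$ direction.
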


\begin{proposition}\label{proposition_3}
    Assume function \(f\) admits a continuous affine minorant and has a unique global optimum, then
    \begin{equation}
        \max f_u^{**}=f_u^{**}(\arg\max f)=\max f.
    \end{equation}
\end{proposition}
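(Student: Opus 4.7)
The plan is to interpret $f_u^{**}$ via Lemma~\ref{lemma_2} (in its ``upper'' form) as the \emph{concave envelope} of $f$, i.e., the smallest upper semicontinuous concave function that majorizes $f$ everywhere, and then squeeze $f_u^{**}$ between $f$ below and the constant $\max f$ above at the global maximizer. Once that sandwich is in place, all three claimed equalities fall out at once.

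First, I would invoke the upper analogue of Lemma~\ref{lemma_2} to conclude that $f_u^{**} \ge f$ pointwise, which immediately gives $\max f_u^{**} \ge \max f$. Next, I would introduce the constant function $c(\boldsymbol{z}) \equiv \max f$. Since a global optimum exists (by the uniqueness hypothesis), $f$ is bounded above by $c$, and $c$ is trivially concave and upper semicontinuous, so $c$ is an admissible concave majorant of $f$. By the minimality property of the concave envelope from Lemma~\ref{lemma_2}, we obtain $f_u^{**}(\boldsymbol{z}) \le c(\boldsymbol{z}) = \max f$ for all $\boldsymbol{z}$, hence $\max f_u^{**} \le \max f$. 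Combining the two inequalities yields $\max f_u^{**} = \max f$. Finally, evaluating at $\boldsymbol{z}^* := \arg\max f$, the chain
\begin{equation}
    \max f = f(\boldsymbol{z}^*) \le f_u^{**}(\boldsymbol{z}^*) \le \max f_u^{**} = \max f
\end{equation}
forces $f_u^{**}(\boldsymbol{z}^*) = \max f$, delivering the middle equality $f_u^{**}(\arg\max f) = \max f$ as well.

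The main obstacle I anticipate is not in the chain of inequalities itself, which is essentially one line once the right comparison function $c$ is chosen, but rather in the reconciliation of terminology: Lemma~\ref{lemma_2} is stated as a convex-envelope result (Fenchel--Moreau--Rockafellar) requiring a continuous affine \emph{minorant}, whereas for a maximization problem the relevant object $f_u^{**}$ is a \emph{concave} envelope, and the prerequisite one truly needs is a continuous affine \emph{majorant} of $f$ (equivalently, that $-f$ admits an affine minorant). I would therefore begin by briefly justifying the upper variant of the Fenchel--Moreau--Rockafellar statement, either by applying Lemma~\ref{lemma_2} directly to $-f$ and using $f_u^{**} = -(-f)^{**}$, or by noting that the hypothesis of the proposition together with the existence of a finite maximum guarantees such a majorant. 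Once the concave-envelope interpretation is rigorously pinned down, the role of the uniqueness assumption is only to make the notation $\arg\max f$ unambiguous; the argument above shows $f_u^{**}$ attains its maximum at \emph{every} maximizer of $f$, so uniqueness is not essential to the equality of values, only to the well-definedness of the middle expression.
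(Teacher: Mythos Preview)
Your argument is correct and rests on the same key idea as the paper's proof: invoke Lemma~\ref{lemma_2} to interpret $f_u^{**}$ as the least concave (``upper convex'') majorant of $f$, then compare it against a simple concave majorant whose maximum equals $\max f$ (the paper leaves this comparison function $g$ unnamed, while you take the constant $c\equiv\max f$). Your direct sandwich $f \le f_u^{**} \le c$ is a cleaner packaging of what the paper does by contradiction, and your observation that uniqueness of the maximizer is only needed to make $\arg\max f$ well-defined is accurate.
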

\begin{proof}
    Assume     
    \begin{equation}
        (\arg\max f_u^{**},\max f_u^{**})\neq (\arg\max f,\max f),
    \end{equation}
    then,
    \begin{equation}\label{p3_1}
        f_u^{**}(\arg\max f) > \max f,
    \end{equation}
    or 
    \begin{equation}\label{p3_2}
        \max f_u^{**}>\max f.
    \end{equation}
    It is easy to find a convex function \(g\ge f\), 
    \begin{equation}\label{p3_3}
        \max g=g(\arg\max f)=\max f.
    \end{equation}
    Then, there exists a convex function 
    \begin{equation}\label{p3_4}
        g^*=\inf\{g,f_u^{**}\}.
    \end{equation}
    From (\ref{p3_1}-\ref{p3_4}), we have \(g^*\leq f_u^{**}\), which is contradiction to Lemma 2.\\
    Therefore, \(f_u^{**}\) and \(f\) have the same global optimum point.
\end{proof}
Proposition 3 indicates that the optimization landscape of \(\hat{f}^{**}\) retains the global optimal solutions of \(\hat{f}\), making it a faithful convex proxy. 
Thus, solving the convex problem given by \(\hat{f}^{**}\) can provide prior information for the RL.
Integrating convex estimation method, our approach could be extended to problems with non-convex objective function. 
As shown in Algorithm \ref{alg:2}, an estimation method could be integrated into the steps of obtaining relaxed solutions to provide prior information for RL.

\begin{algorithm}[t]
    \caption{RL-based Optimization Algorithm for 0-1 Mixed Problem|Non Convex Objective Function}
    \label{alg:2}
    \begin{algorithmic}[1]
    \STATE Same with Algorithm 1 steps 1-5.\setcounter{ALC@line}{5}
    \STATE \qquad\quad If \(\exists\, \textrm{conv}(\hat{f})\):
    \STATE \qquad\qquad  Solve relaxed sub-problem: 
         \\\qquad\qquad \(\boldsymbol{\widetilde x}={\arg\max}_{\boldsymbol{\widetilde x}}\, \textrm{conv}(\hat{f})(\boldsymbol{y},\widetilde{\boldsymbol{x}}|\boldsymbol{s}).\)
    \STATE \qquad\quad Else: 
         \\\qquad\qquad Estimated \(\boldsymbol{\widetilde x}\) by numerical estimation methods.
    \STATE \qquad\quad  Initialize \(V(\boldsymbol{s})=\boldsymbol{\widetilde x}\). 
    
    \STATE Same with Algorithm 1 steps 9-11.\setcounter{ALC@line}{12}
    \STATE \quad Reward 
                \begin{equation*}
                    R=
                    \begin{cases}
                     -1,      &\boldsymbol{s}\text{ is not completed};
                     \\\max_y \textrm{conv}(\hat{f})(\boldsymbol{y}|\boldsymbol{x=s}),  & \boldsymbol{s}\text{ is completed}.
                    \end{cases}
                \end{equation*}
    \STATE Same with Algorithm 1 steps 12-14.
    \end{algorithmic}
\end{algorithm}
\vspace{-10pt}
\subsection{Non-Convex Constraints Problems}
Besides the non-convexity in the objective function discussed in Section \uppercase\expandafter{\romannumeral 4}. A, let us discuss how to extend the proposed solution to problems with non-convex constraints. 
Let \(\hat{g}_l(\boldsymbol{x,y})_{l\in\mathcal{L}}\) be the non-convex constraints, which typically lead to a feasible region non-convex. 
The non-convexity of feasible region makes it harder to search for the global optimum.
Three potential ways could be introduced to extend our proposed solution to this case: convex hull \cite{convexhull}, Lagrangian dual method \cite{L_nonconvex_constrants}, and/or arc consistency judgment \cite{arc_consistency}.

The convex hull approach transforms the non-convex feasible region into a convex hull using relaxation techniques, enabling efficient convex optimization \cite{convexhull}. 
The convex hull retains optimality but compromises feasibility. 
The Lagrangian dual method handles coupling non-convex constraints by decomposing them into independent subproblems, which are easier to handle independently \cite{L_nonconvex_constrants}. 
\textcolor{black}{The arc consistency method reduces the search space by enforcing constraints between variables. It works by iteratively removing inconsistent values from the variables domains \cite{arc_consistency}. This process simplifies the problem and helps manage non-convexity.}

By applying these three methods individually or in combination, our approach can be extended to problems with non-convex constraints. 
As shown in Alg. \ref{alg:3}, the constraints, which are difficult to approximate convexly, could be ignored during solving the relaxed problem, making feasible region convex. 
Then, consider arc consistency in RL when initial state values, setting the values of the states do not satisfy arc consistency to -1. 
\textcolor{black}{
This negative assignment serves as a strong penalty for actions that violate arc consistency. The mechanism is directly linked to our policy selection rule in (\ref{sele-action}), where the probability of selecting an action is proportional to the value of the subsequent state. During the policy calculation, the negative value ensures that the probability of selecting the invalid action will tend to zero. As a result, the RL agent is steered away from exploring this infeasible path, which prunes this branch from the search space.
}

\begin{algorithm}[t]
    \caption{RL-based Optimization Algorithm for 0-1 Mixed Problem|Arc Consistency Judgment}
    \label{alg:3}
    \begin{algorithmic}[1]
    \STATE  Same with Algorithm 1 steps 1-4.\setcounter{ALC@line}{4}
    \STATE \qquad  If \(\boldsymbol{s}\notin \textrm{dom}(V(\boldsymbol{s}))\):
    \STATE \qquad\quad  Solve relaxed sub-problem: 
         \\\qquad\qquad \(\boldsymbol{\widetilde x}={\arg\max}_{\boldsymbol{\widetilde x}\in \boldsymbol{\widetilde X}}  \textrm{conv}(f)(\boldsymbol{y},\widetilde{\boldsymbol{x}}|\boldsymbol{s}).\)
    \STATE \qquad\quad  For \(j \in \mathcal{M} \):
    \STATE \qquad\qquad  If \({\boldsymbol a}_i^j \) does not satisfy arc consistency:
    \STATE \qquad\qquad\quad Let \(\widetilde{x}_{ij}=-1\).
    \STATE \qquad\quad  Initialize \(V(\boldsymbol{s})=\boldsymbol{\widetilde x}\).
    \STATE Same with Algorithm 1 steps 8-14.
    \end{algorithmic}
\end{algorithm}

\section{Simulation Results and Analysis}
In this section, we evaluate the performance of our proposed optimization approach through numerical simulations.
Traditional RL approach and B\&B approach are selected as benchmarks.
\begin{itemize}
    \item 
    \emph{B\&B approach:} This approach keeps the same process to obtain the relaxed solution first, then employing depth-first heuristic exploration to determine binary variables. This branching process continues until the binary part of solution obtained from all branches are binary.
    \item 
    \emph{Traditional RL approach:} RL is employed to explore the binary part of solutions. The initial values of the states are averaged due to the lack of prior information, which differs from the proposed approach. Arc consistency judgment is used to improve searching efficiency. After obtaining binary solution, the continuous variables are determined using convex optimization. 
\end{itemize}
For ease of comparison, the \textit{normalized objective value} representing the ratio of objective value to the upper bound of the relaxation problem is utilized. 
Note that this upper bound is not the optimal objective value. Thus it is meaningless to compare this normalized value between different figures (i.e., with different settings), but meaningful to compare it and reveal the performance gain within one figure (i.e., with the same settings). 
\textcolor{black}{
Before presenting the results, we define the key hyperparameter \(\alpha\in[0,1]\) as the learning rate. This parameter controls how much the final reward from an episode updates the state-value function, managing the trade-off between faster learning and more stable updates.}
The simulation codes are built with Python 3.8 and conducted on an Intel Core i7 CPU with 16GB of RAM.

\vspace{-10pt}
\subsection{Simulation Results for Convex Relaxed Problems}
First, we conduct the simulations for problems with a convex objective function. 
Consider the below problem as a common use case. For example, this problem can formulate a joint user-association and edge-computing resource allocation problem, as follows:
\begin{align} \label{S-P1}
\quad\max_{\boldsymbol{x,y}} \quad &\sum_i^N\sum_j^M \big[\log (1+s_{ij}x_{ij})+d_{ij}y_{ij}\big]\\
\text{s.t.}\quad  &\textstyle{\sum_{i=1}^N} y_{ij}\leq D_{j},\quad  \forall{j\in \mathcal{M}} ;\tag{\ref{S-P1}{a}}\\
&\textstyle{\sum_{i=1}^N} c_{ij}x_{ij}\leq C_{j},\quad \forall{j\in \mathcal{M}};\tag{\ref{S-P1}{b}}\\
&\textstyle{\sum_{j=1}^{M}}x_{ij}=1 , \quad \forall{i\in \mathcal{N}};\tag{\ref{S-P1}{c}}\\
&x_{ij}\in \{0,1 \} ,\quad \forall{i\in \mathcal{N}} , \forall{j\in \mathcal{M}}\tag{\ref{S-P1}{d}};
\end{align}
{where} \(x_{ij}\) is the association-decision variable, \(y_{ij}\) is the computing‑resource variable, \(s_{ij}\) and \(d_{ij}\) represent the transmission and computation utility weights, respectively; 
\(c_{ij}\) is the amount of communication resources required, and \(D_j\) and \(C_j\) represent, respectively,the computation and communication resource budgets at server \(j\).


\begin{figure}
    \centering
    \includegraphics[scale=0.54]{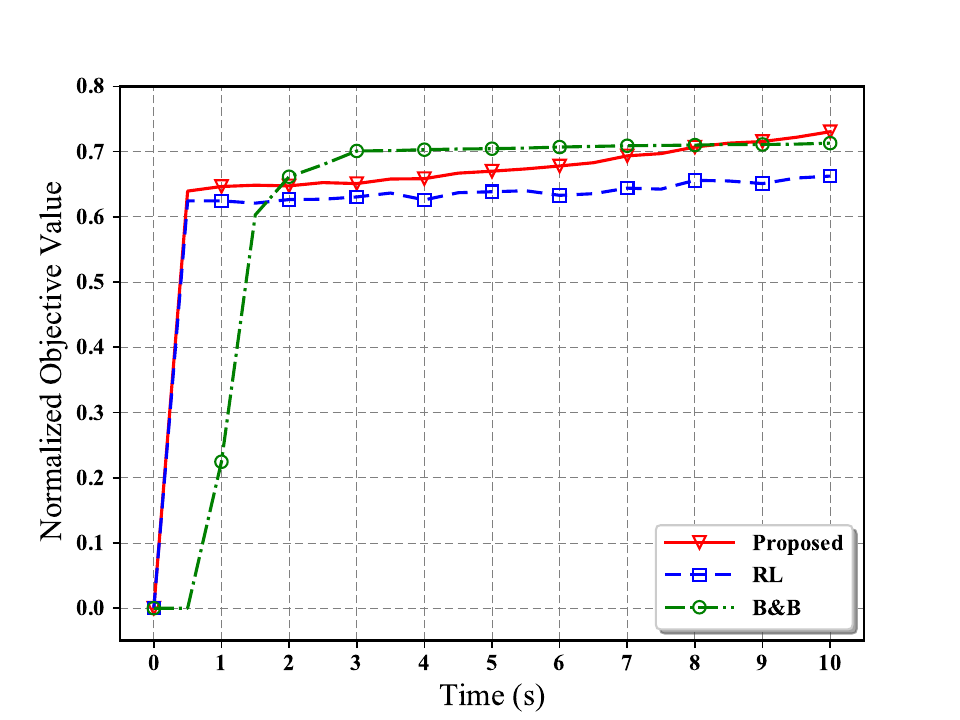}
    \caption{Running time vs. normalized objective value (\(N = 8, M = 4\)).}
    \label{c_p-tn8}
    \vspace{-1pt}
    \includegraphics[scale=0.54]{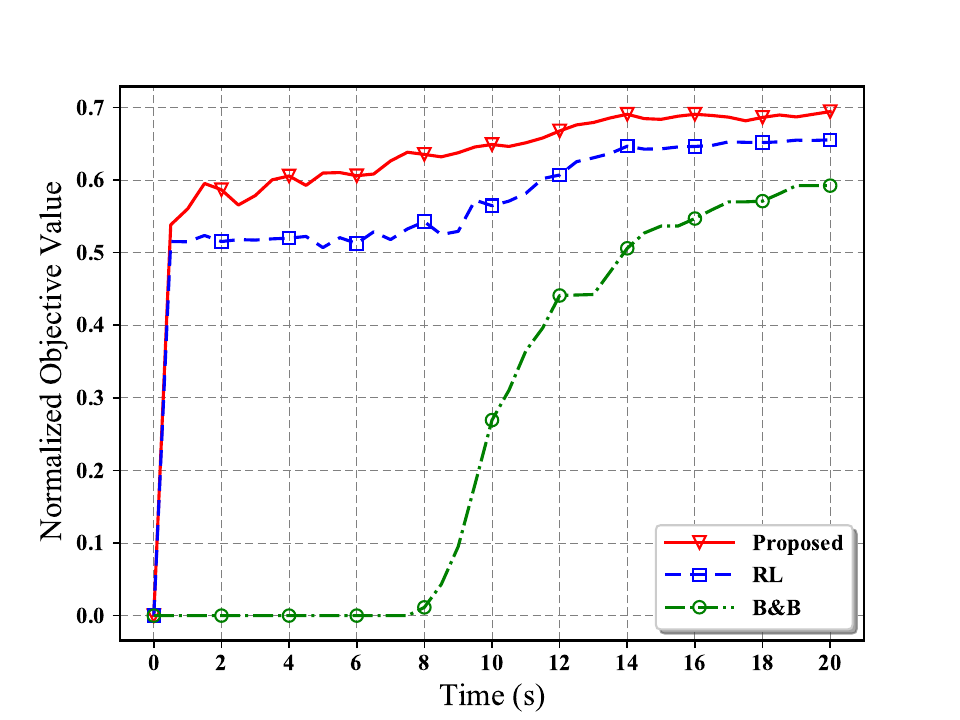}
    \caption{Running time vs. normalized objective value (\(N=20, M=5\)).}
    \label{c_p-tn20}
\end{figure}

Figs. \ref{c_p-tn8} and \ref{c_p-tn20} compare the normalized objective value over the time under two small scales for the three algorithms. From Fig. \ref{c_p-tn8} with a problem size \(N=8\), \(M=4\), it can be observed that the objective value of all three approaches increases rapidly at the beginning. 
\textcolor{black}{
For this small-scale problem, the proposed approach and the traditional RL approach yield very similar results.}
Although the B\&B approach shows a lower initial increase, its normalized objective value is slightly higher than the other two approaches. 
As shown in Fig. \ref{c_p-tn20} with a problem size of \(N=20\), \(M=5\), it becomes clear that the proposed approach outperforms the two benchmarks in terms of achieved objective value.
The proposed approach reduces the convergence time by 30\% compared to the B\&B approach, and shows a 5\% improvement in the normalized objective value compared to the RL approach, and a 15\% improvement compared to the B\&B approach.
\begin{figure}[!t]
    \centering    
    \includegraphics[scale=0.55,angle=0]{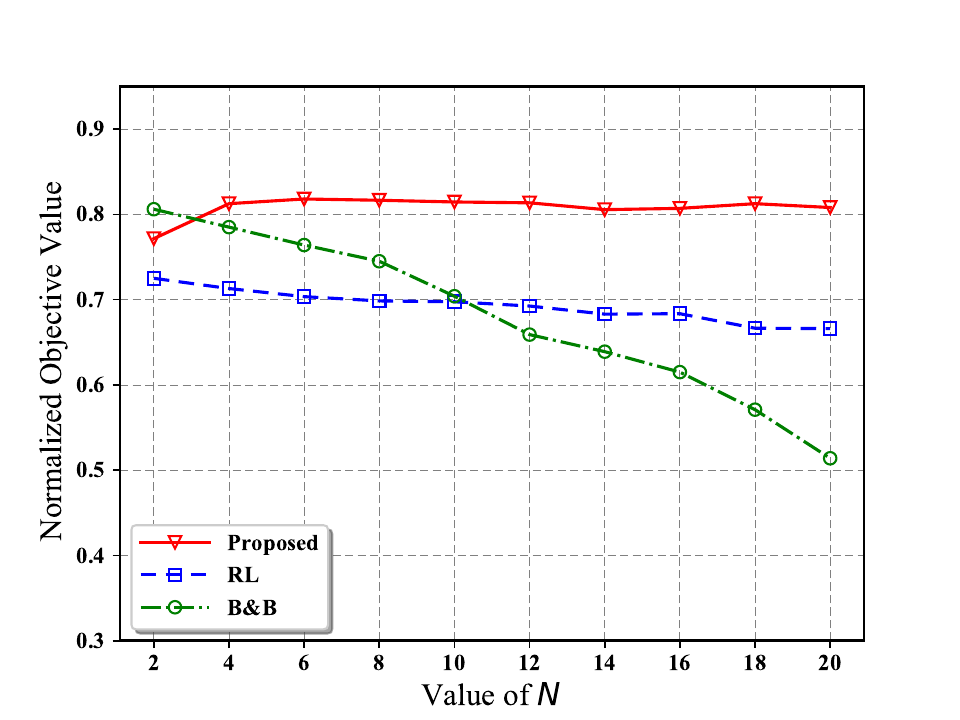}
    \caption{Comparison of normalized objective value for different values of $N$ under \(M = 5\).}
    \label{c_p-size}
\end{figure}

\begin{figure}[!t]
    \centering
    \includegraphics [scale=0.55,angle=0]{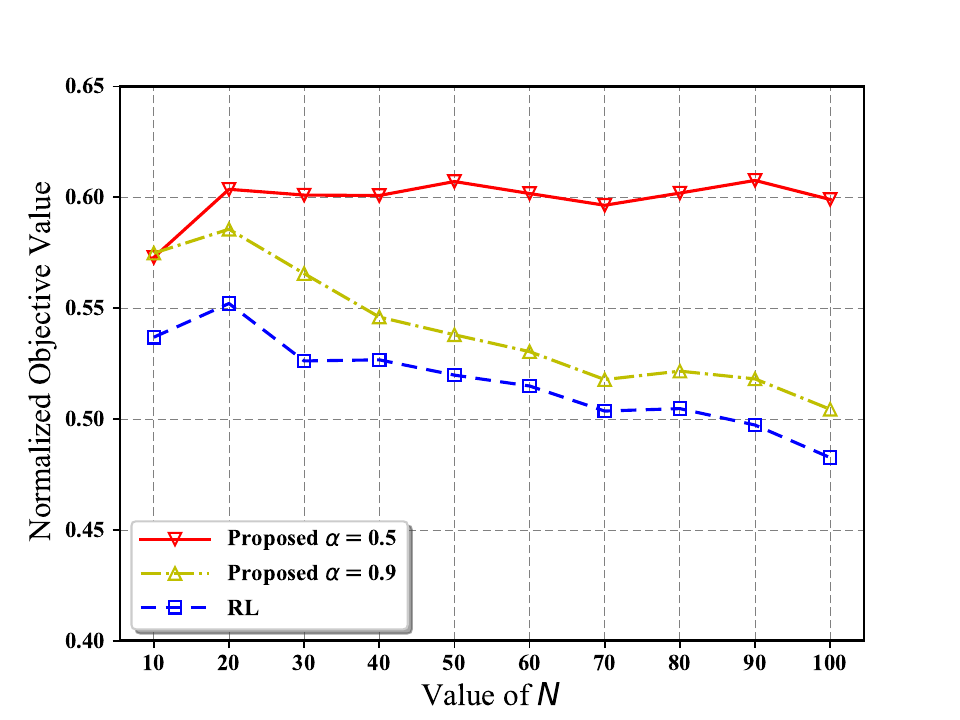}
    \caption{Comparison of normalized objective value for different values of $N$ under \(M = 10\).}
    \label{cov_p}
    \includegraphics[scale=0.55]{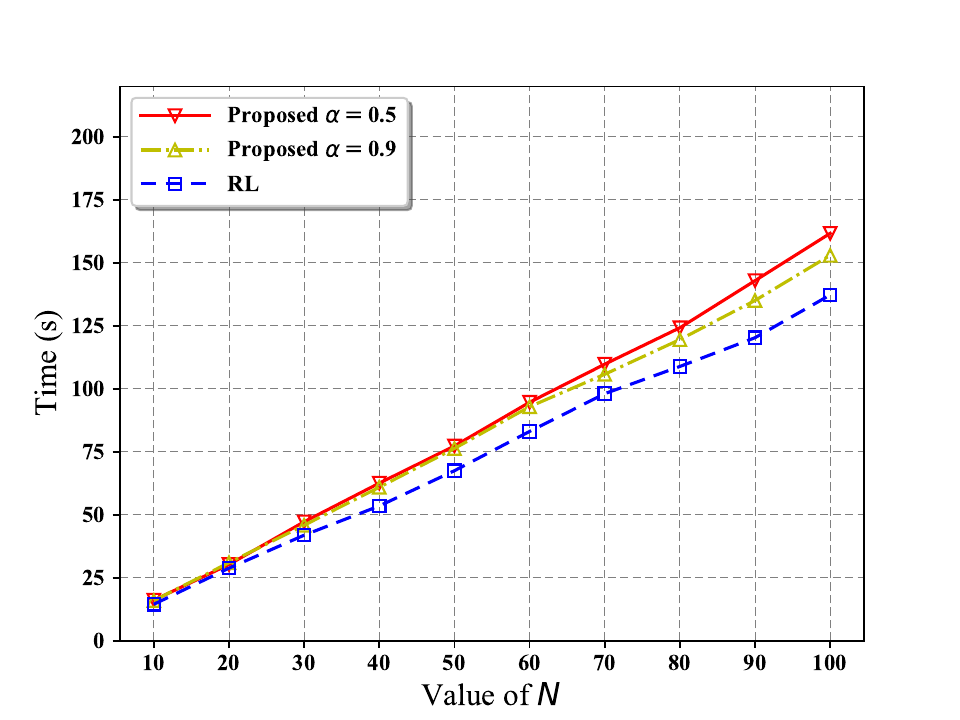}
    \caption{Comparison of running time for different values of $N$ under \(M = 10\).}
    \label{cov_t}
\end{figure}

\begin{figure}[!t]
    \centering
    \includegraphics [scale=0.55,angle=0]{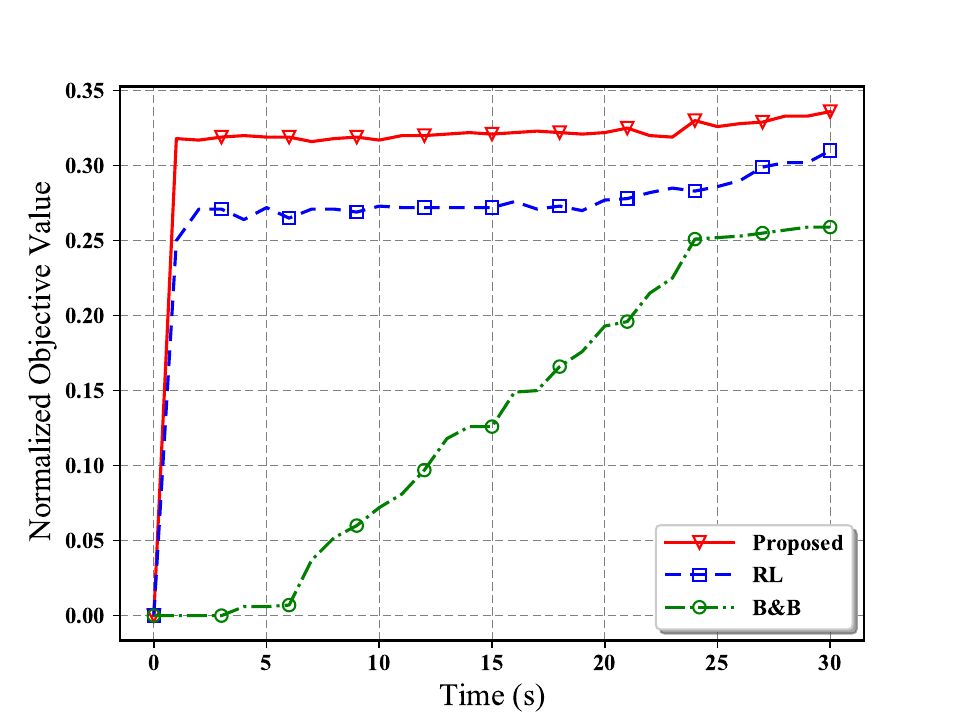}
    \caption{Running time vs. normalized objective value (\(N=10, M=5\)).}
    \label{sim_qos0.5_c}
    \centering
    \includegraphics [scale=0.55,angle=0]{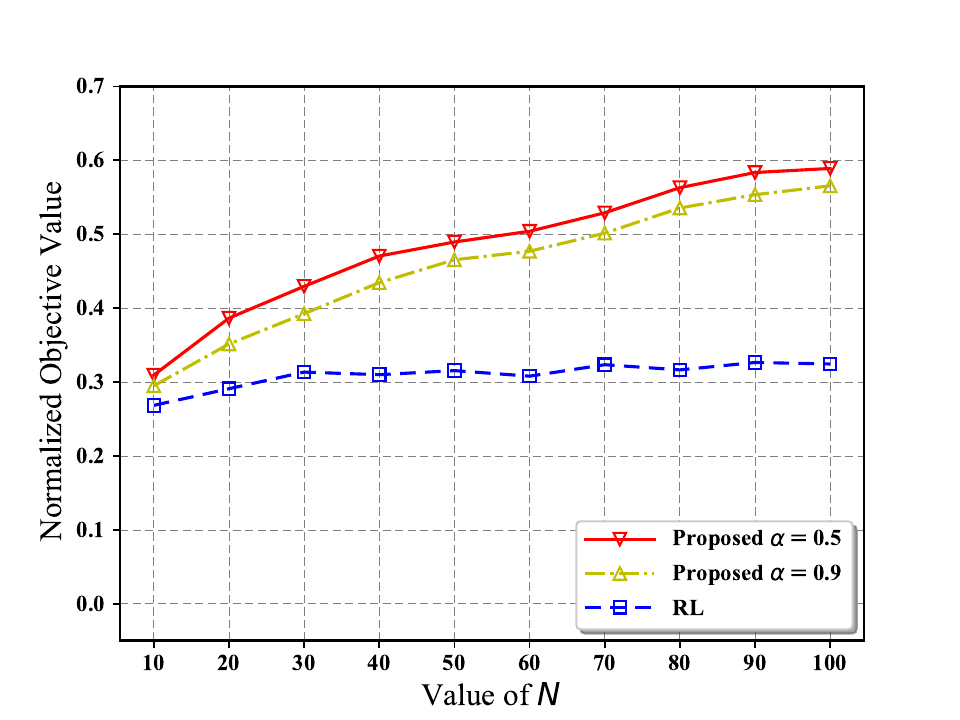}
    \caption{Comparison of normalized objective value for different values of $N$ under \(M = 10\) and \(Q = 0.1\).}
    \label{qos30_p_0.1}
    \includegraphics[scale=0.55]{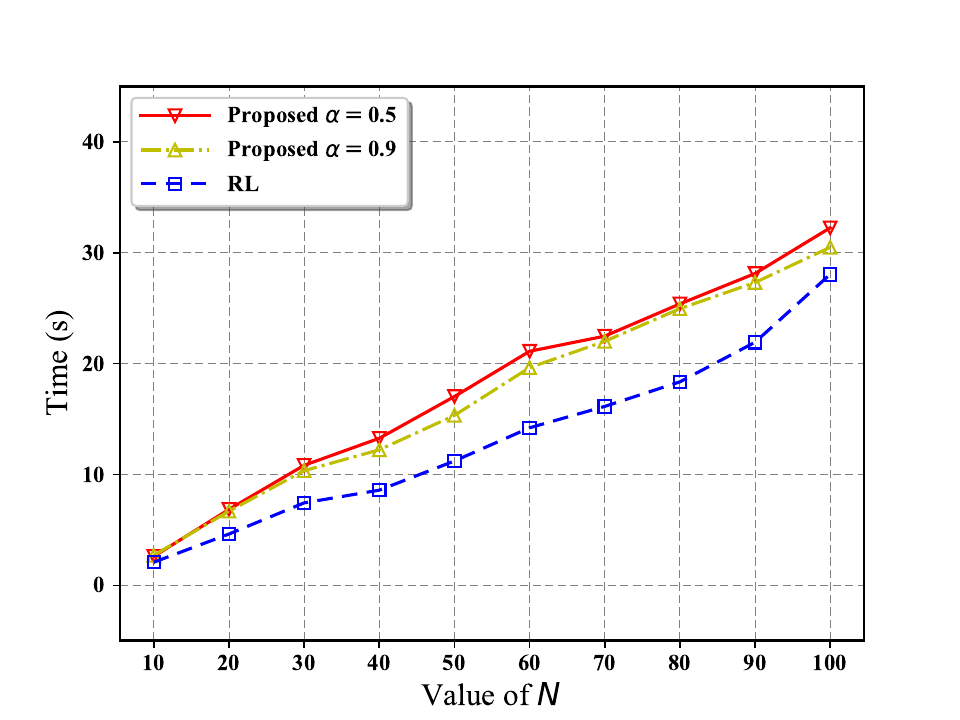}
    \caption{\textcolor{black}{Comparison of running time for different values of $N$ under \(M = 10\) and \(Q = 0.1\).}}
    \label{qos30_t_0.1}
\end{figure}
We then examine the normalized objective value for different value of \(N\) (typically \(N\) can be treated as the number of users in a network) in problem (\ref{S-P1}), as shown in Fig. \ref{c_p-size}. 
As B\&B approach cannot be deployed in large-scale problems, this simulation still focuses on small scale problem, but the next figure will examine large scale problems without B\&B approach. 
From this figure, we observe that the proposed approach always maintains the highest normalized objective value that is also stable. 
In contrast, the RL approach experiences a slight decrease in normalized objective value as the problem size increases, but it remains relatively stable overall. 
The B\&B approach demonstrates a degradation as the problem size increases, due to the high computational complexity. 

Fig. \ref{cov_p} presents the trend of normalized objective value under large scale problems. 
This figure compares the normalized objective value of the proposed approach under two different learning rate (\(\alpha=0.5\) and \(\alpha=0.9\)) with that of the RL approach. 
It is evident that the proposed approach, particularly when \(\alpha=0.5\), outperforms the RL approach by 20\% when \(N\ge70\).
Fig. \ref{cov_t} compares the running time across different problem sizes. The running time of the proposed approach is slightly higher than that of the RL approach. 
This reveals that the proposed approach achieves higher objective value with a minor compromise on computation time. The extra computing complexity mainly lies in solving relaxed subproblems. Note that computing entities in practical networks should be more powerful, which can further mitigate the extra running time.

\begin{table}[t]
\centering
\caption{Simulation Parameters}
\begin{tabular}{|l|l|}
\hline
\textbf{Parameters} & \textbf{Values}  \\ \hline
Number of BSs ($M$) & 10   \\ \hline
Bandwidth limitation of BSs  ($C$) & 20 MHz   \\ \hline
Density of BSs  & 16 \(/\textrm{km}^2\)   \\ \hline
Transmit power of UDs ($P_r$) & 20 dBm   \\ \hline
Noise power ($P_n$) & -114 dBm \cite{noise_power}  \\ \hline
Path loss model ($L(d)$) & \(L(d)=34+40\log_{10}(d)\)   \\ \hline
\end{tabular}
\end{table}

\vspace{-5pt}
\subsection{Simulation Results for Non-Convex Relaxed Problems}
Then, we conduct the simulations for non-convex relaxed problems. 
Consider the below problem as a common use case. 
For example, this problem can formulate a joint user association and bandwidth allocation, 
using a logarithmic utility function as the objective function, as follows: 

\begin{align} \label{S-P2}
\max_{\boldsymbol{x,y}}  &\sum_i^N\sum_j^M \log \big[ x_{ij}y_{ij}\log_2(1+{\gamma_{ij}}) \big] \\
\text{s.t.}\quad  &\textstyle{\sum_{j=1}^M}x_{ij}y_{ij}\log_2(1+{\gamma_{ij}})\leq Q,  \forall{i\in \mathcal{N}};\tag{\ref{S-P2}{a}}\\
&\textstyle{\sum_{i=1}^N} y_{ij}\leq C_{},\quad \forall{j\in \mathcal{M}};\tag{\ref{S-P2}{b}}\\
&\textstyle{\sum_{j=1}^{M}}x_{ij}=1 , \quad \forall{i\in \mathcal{N}};\tag{\ref{S-P2}{c}}\\
&x_{ij}\in \{0,1 \} ,\quad \forall{i\in \mathcal{N}} , \forall{j\in \mathcal{M}}\tag{\ref{S-P2}{d}};
\end{align}
where \(x_{ij}\) {is the association-decision variable}, \(y_{ij}\) {is the bandwidth resource variable}. 
\textcolor{black}{To ensure a generic and unbiased evaluation, we model a standard wireless network topology. In the simulations, BSs are placed on a uniform grid within a square area, and UDs are randomly distributed throughout the area. 
 This setup avoids any performance bias that might result from a fixed, structured topology and is representative of typical urban or suburban deployments. 
The specific parameters for path loss, transmit power, and resource constraints are then applied to this randomly generated topology and are detailed in Table \uppercase\expandafter{\romannumeral 1}.}

We start with a small size problem with \(M=5\) and \(N=10\), and evaluate the normalized objective value in problem (\ref{S-P2}) over time. 
As shown in Fig. \ref{sim_qos0.5_c}, we see that the proposed approach always achieves the highest and most stable objective value compared to the two benchmarks. 
Moreover, the proposed approach shows a fast convergence speed compared with B\&B. 
Moreover, we find that although the RL approach achieves a similar convergence speed, the objective value obtained is lower than our approach. The B\&B approach performs the worst due to inefficiencies in branching.

\begin{figure}[!t]
    \centering
    \includegraphics [scale=0.55,angle=0]{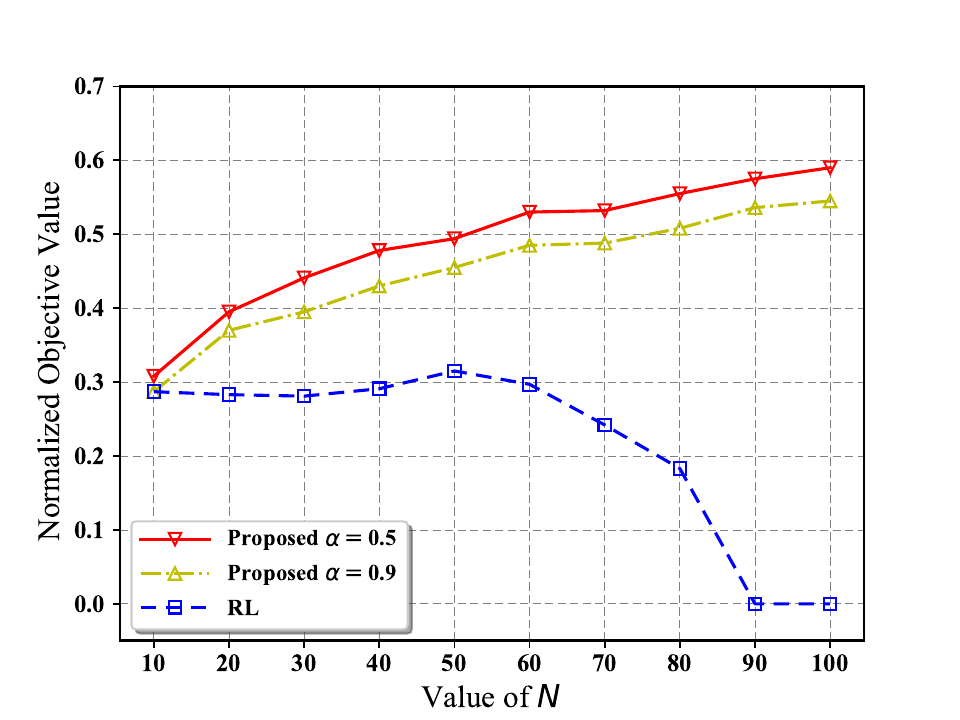}
    \caption{Comparison of normalized objective value for different values of $N$ under \(M = 10\) and \(Q = 0.5\).}
    \label{qos30_p_0.5}
    \includegraphics[scale=0.55]{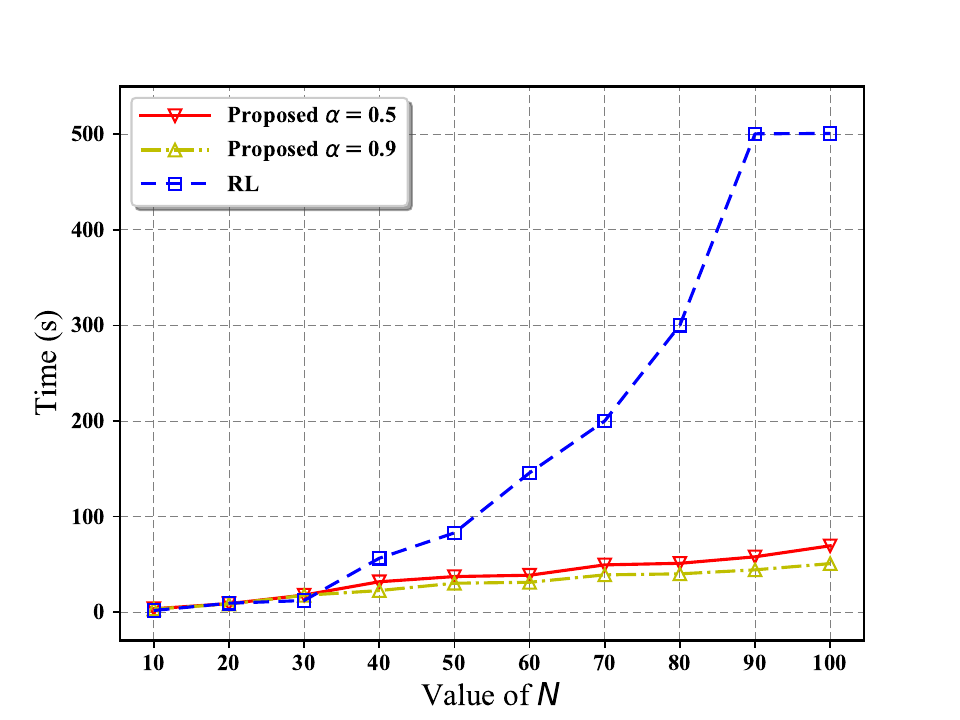}
    \caption{Comparison of running time for different values of $N$ under \(M = 10\) and \(Q = 0.5\).}
    \label{qos30_t_0.5}
\end{figure}


Next, we evaluate the performance for large size problems, where B\&B cannot be used because of complexity. 
Figs. \ref{qos30_p_0.1} and \ref{qos30_t_0.1} show the normalized objective value and running time respectively for different sizes of problem (\ref{S-P2}), with \( Q=0.1 \) in (22a).
In Fig. \ref{qos30_p_0.1}, the proposed approach significantly outperforms the RL approach.
When the problem size increases, the normalized objective value of the RL approach remains constant, while the proposed approach shows a consistent upward trend, particularly when \( \alpha=0.5 \). 
This is because RL tends to get trapped in local optima while exploring feasible solutions, whereas the proposed approach efficiently explores the solution space using relaxed solutions as prior information.
From Fig. 9, it is observed that the running time of the proposed approach is slightly higher than that of the RL approach. 
Although the running time of the proposed approach increases by about 30\% compared to RL when \(N\ge 60\), the computational overhead is acceptable considering the 50\% improvement in the objective value.

Figs. \ref{qos30_p_0.5} and \ref{qos30_t_0.5} show the normalized objective value and running time respectively for different sizes of problem (\ref{S-P2}), with \( Q=0.5 \) in the constraint (22a).
In Fig. \ref{qos30_p_0.5}, the proposed approach maintains a high normalized objective value, whereas the RL approach exhibits a significant performance decline. 
Fig. \ref{qos30_t_0.5} shows that, as the problem size increases, the time needed to execute the RL process increases significantly, far exceeding that of the proposed approach. 
The normalized objective value of the RL approach decreases to zero as \(N\) approaches 90, and the running time reaches 500 seconds, which is set as the unacceptable time threshold. 
This indicates that RL approach suffers from exploration inefficiency when the constraints become tight and feasible solutions become sparse in the possible solution space. 
In contrast, the proposed approach maintains a stable objective value, with running time remains controlled.

\section{Conclusions}
In this paper, we have proposed an optimization approach that integrates convex optimization with RL in the binary decision-making process to address 0-1 mixed problems. 
Specifically, we have solved a relaxed sub-problem before making decisions and use the relaxed solution as prior information to guide RL search policy. 
We have theoretically proven that the neighborhood of the relaxed solution has a higher probability of containing the suboptimal solutions. 
Hence, relaxed solutions can provide prior information to RL for effective guidance of its searching policy. 
Furthermore, we have discussed how to extend our approach to deal with non-convex objective functions and non-convex constraints, thus generalizing the proposed approach to complex real-world networking scenarios. 
Simulation results validated the superiority of our approach under various scenarios, showing a significant improvement in objective value with a minor compromise in computing complexity.  
In general, this paper proposed a unified approach combining RL with optimization theory to efficiently solve a series of 0-1 mixed problems in the field of networking. 
Our approach bridges the existing gap in methodologies for solving complicated networking problems in future emerging applications. 



%

\bibliographystyle{IEEEtran}
\bibliography{bibfile}




\end{document}